\documentclass[11pt]{article}
\usepackage[letterpaper,margin=1in]{geometry}
\usepackage[T1]{fontenc}
\usepackage{lmodern}
\usepackage{amsmath,amssymb,amsthm,mathtools}
\usepackage{microtype}
\usepackage{xcolor}
\usepackage{booktabs,tabularx,longtable}
\usepackage{enumitem}
\usepackage{authblk}
\usepackage{algorithm}
\usepackage{placeins}
\usepackage[noend]{algpseudocode}
\usepackage{tikz}
\usetikzlibrary{arrows.meta,positioning,calc,fit,decorations.pathreplacing}
\definecolor{figblue}{RGB}{37,92,130}
\definecolor{figorange}{RGB}{178,91,36}
\definecolor{figgreen}{RGB}{45,111,86}
\tikzset{paper arrow/.style={-{Stealth[length=2mm]},semithick},
  paper box/.style={draw,rounded corners=2pt,align=center,inner sep=5pt}}
\usepackage[colorlinks=true,linkcolor=blue!55!black,citecolor=blue!55!black,urlcolor=blue!55!black]{hyperref}
\numberwithin{equation}{section}
\newtheorem{theorem}{Theorem}[section]
\newtheorem{lemma}[theorem]{Lemma}
\newtheorem{proposition}[theorem]{Proposition}
\newtheorem{corollary}[theorem]{Corollary}
\theoremstyle{definition}

\theoremstyle{remark}

\newcommand{\ALG}{\operatorname{ALG}}
\newcommand{\OPT}{\operatorname{OPT}}
\newcommand{\ADV}{\operatorname{ADV}}
\newcommand{\cost}{\operatorname{cost}}

\newcommand{\E}{\mathbb E}

\newcommand{\eps}{\varepsilon}

\let\originalthebibliography\thebibliography
\renewcommand{\thebibliography}[1]{%
  \originalthebibliography{#1}%
  \setlength{\itemsep}{2pt}}

\title{Capacitated Line Aggregation with Maximum Delay}
\author{Runtian Ren, Shengcai Liu}
\affil{
Guangdong Provincial Key Laboratory of Brain-Inspired Intelligent Computation,\\
Department of Computer Science and Engineering,\\
Southern University of Science and Technology, Shenzhen 518055, China\\
liusc3@sustech.edu.cn
}
\date{}
\hypersetup{pdftitle={Capacitated Line Aggregation with Maximum Delay},pdfauthor={Runtian Ren and Shengcai Liu}}
\begin{document}
\pagenumbering{roman}
\maketitle
\thispagestyle{empty}
\begin{abstract}
We study capacitated line aggregation with per-batch maximum delay.
Requests arrive over time at locations on a rooted half-line and are
served in instantaneous batches of at most $k$ requests. Each batch
pays the distance to its farthest request plus the waiting time of its
oldest request. The objective is to minimize the total cost.

We give a deterministic online algorithm with competitive ratio
$36+12\sqrt3$. The algorithm uses a Next Fit idea along paths of nested
time intervals: it assigns a new group to an existing batch when a
path-load test permits, and opens a new batch otherwise. The path load
controls both capacity and the cost of opening new batches.
Randomizing the common scale improves the expected competitive ratio
to $(18+6\sqrt3)/\ln2$ against an oblivious adversary. Both guarantees
extend to arbitrary indivisible request sizes.

We also study capacitated LAD in the learning-augmented model.
Advice revealed at its service times allows $(1+\eps)$-consistency against every complete
feasible advice schedule and $O(\eps^{-2})$-robustness. The analysis
charges repeated switches to successive decreases in the remaining
advice cost.
\end{abstract}

\clearpage
\pagenumbering{arabic}
\section{Introduction}
\label{sec:introduction}

Sending an acknowledgment, replenishing several types of goods, and
serving requests along a common route all allow several requests to
share one service cost. Waiting for more requests can reduce that cost,
but also delays the requests already present. This tradeoff has been
studied in TCP acknowledgment
\cite{dooly2001tcp,seiden2000guessing,karlin2001dynamic},
joint replenishment (JRP)
\cite{askoy1988multi,goyal1989joint,joneja1990joint,buchbinder2008online,bienkowski2014better},
and control-message aggregation
\cite{khanna2002control,brito2004competitive,bienkowski2013chain}.
Multi-level aggregation (MLA) extends these problems to rooted trees, with
algorithms for additive waiting costs and deadlines
\cite{bienkowski2020mlap,bienkowski2021new,buchbinder2017depth,
azar2019framework,mcmahan2021dcompetitive,turoczy2025memory}
and a stochastic-arrival variant~\cite{mari2024online}.

Under \emph{per-batch maximum delay}, each service pays its service cost
plus the waiting time of its oldest request, and the objective sums these
costs over all services. Lu et al.~\cite{lu2026maximum} study this objective
for uncapacitated multi-level aggregation on rooted trees. They obtain
optimal competitive ratios of $2$ deterministically and $e/(e-1)$ against
an oblivious adversary. We study what happens when a capacity constraint
is added, focusing on line aggregation with delay (LAD).

In capacitated LAD, requests arrive at locations on a rooted half-line.
Each service completes at most $k$ pending requests and pays the
distance to the farthest one, together with their maximum waiting time.
Besides choosing when to serve, the algorithm must choose which requests
to serve together. Even on a line, these decisions are closely coupled:
a distant request can pay for serving nearby requests, but each nearby
request occupies a place that a later request might use.

Under additive delay, every pending request contributes to the waiting
cost. Under maximum delay, younger requests can join a batch without
increasing its waiting cost. This sharing is useful only while the
requests fit. If an uncapacitated service is split into several feasible
batches, several of them may contain old requests, and each batch pays
its own maximum waiting time. Conversely, serving each distance class
independently avoids this splitting decision but loses opportunities
to share waiting across classes. For LAD, this gives a
logarithmic competitive ratio; see
Section~\ref{subsec:class-baseline} for the algorithm and a matching
lower-bound instance.

For LAD, we avoid this logarithmic loss by combining requests across
distance classes and obtain a constant-competitive online algorithm.
The algorithm assigns
requests to future batches through nested time intervals. Its admission
rule controls both the capacity of a batch and the cost of opening a
new batch when admission fails. Thus capacity does not force the
logarithmic loss incurred by independent distance classes.

\paragraph{Our contributions.}
We obtain the following main results in this paper.
\begin{itemize}[leftmargin=*,itemsep=0.4em,parsep=0pt,topsep=0.4em]
\item For LAD, we give a deterministic online algorithm with
competitive ratio $36+12\sqrt3<56.785$ for capacity $k$.
For radii on a fixed scaled dyadic grid, the competitive ratio is $18+6\sqrt3$.
A random common scale offset gives an expected competitive ratio of
$(18+6\sqrt3)/\ln2<40.962$ against an oblivious adversary. Both guarantees
extend to arbitrary indivisible request sizes, as described in
Section~\ref{subsec:arbitrary-sizes}.

\item We also study capacitated LAD in the learning-augmented model.
An algorithm receives advice only when a proposed service is due. We
combine this advice with any
deterministic $R$-competitive baseline. For $0<\eps\le1$, the resulting algorithm is
$(1+\eps)$-consistent against every complete feasible advice stream and
$(32/\eps^2+72/\eps+36)R$-robust. Our LAD baseline gives
$O(\eps^{-2})$-robustness.
\end{itemize}

At a single location, LAD reduces to capacitated TCP acknowledgment,
whose optimal deterministic competitive ratio is two. This special
case also gives a randomized lower bound of $e/(e-1)$ on competitive
ratios uniform in capacity. Appendix~\ref{sec:online} gives the
fixed-location algorithm and both lower-bound arguments.

\paragraph{Our techniques.}
Our algorithms address two difficulties: assigning requests to future
batches and switching between schedules that have served different requests.
\begin{itemize}[leftmargin=*,itemsep=0.5em,parsep=0.3em,topsep=0.4em]
\item \emph{A constant-competitive algorithm.}
We use Next Fit along paths of nested time intervals. A path-load
test accounts for group sizes and interval lengths, ensuring capacity
and paying for new batches when admission fails. Next Fit prevents
descendants from repeatedly charging the same ancestor.
Section~\ref{subsec:line-main-idea} develops the rule and its additive
charges through a worked example.

\item \emph{Online advice.}
Switching schedules requires clearing requests that the new schedule
may already have served. We separate the service and waiting costs
of this clearance. Baseline budgets control service costs; successive
decreases in the remaining advice cost pay waiting charges across
switches without repeatedly charging an entire advice batch.

\end{itemize}

\subsection{Related work}
\label{sec:related-work}

\paragraph{Aggregation with additive waiting and maximum delay.}
TCP acknowledgment, JRP, control-message aggregation, and MLA have been
studied with additive waiting costs and deadlines
\cite{dooly2001tcp,karlin2001dynamic,buchbinder2008online,
bienkowski2014better,bienkowski2013chain,bienkowski2020mlap,
bienkowski2021new,buchbinder2017depth,azar2019framework,
mcmahan2021dcompetitive,turoczy2025memory}.
Our direct predecessor is the work of Lu et al.~\cite{lu2026maximum}
on uncapacitated aggregation with per-batch maximum delay. They show
that an offline optimum serves consecutive arrival blocks. This yields
a polynomial-time offline dynamic program, which also sets the deadlines
of their online algorithm. With capacity, such a block may not fit in
one service, and splitting it can charge the same waiting period to
several batches.

Lu et al.~\cite{lu2026covering} extend the maximum-delay framework to
monotone subadditive service costs. With exact batch optimization, they
obtain optimal competitive ratios of $2$ deterministically and $e/(e-1)$
randomly against an oblivious adversary. Their framework
requires every request set to be realizable by one service. In LAD,
$f(Q)=\max_{r\in Q}x_r$ remains subadditive, but the union of two
capacity-feasible batches need not fit in one service. Treating their
partition as a single service would omit the separate maximum-delay
charges, so this framework does not directly apply.

\paragraph{Capacitated batching and capacity transfer.}
Lu et al.~\cite{lu2026binmaximum} study bin packing with per-bin maximum
delay and arbitrary positive item sizes. This is the single-location
special case of our size extension, with unit service cost. They obtain
an expected competitive ratio of approximately $2.418$ against an
oblivious adversary in the ideal random-real model. Our setting allows
requests at different distances to share a service.

Offline capacitated LAD with deadlines is captured by Nonner's
capacitated max-batching model~\cite{nonner2016batching}.
A request with window $[a_r,b_r]$ and radius $x_r$ corresponds to an
interval of weight $x_r$. A feasible batch contains at most $k$ intervals
with a common intersection and costs their maximum weight. The problem
is strongly NP-hard already for $k=3$ and admits a PTAS for every fixed
$k$. For proper interval graphs, Correa and Megow~\cite{correa2015clique}
give a PTAS with arbitrary capacity. These results concern deadline
feasibility and service cost, rather than per-batch maximum delay.

A different connection concerns adding capacity to an uncapacitated
schedule. Tour-partitioning bounds combine an uncapacitated service
cost with a demand-weighted root-distance lower bound
\cite{blauth2020routing,zhao2025capacity}. Fukunaga et
al.~\cite{fukunaga2014deliver} use a reduction from capacitated to
uncapacitated periodic inventory routing, where costs include inventory
holding. For aggregation, splitting a service at its original time
preserves additive waiting costs and deadline feasibility. Applying
the partitioning bounds and a volume lower bound therefore increases
the competitive ratio by only an additive constant; we give the
derivation in Appendix~\ref{sec:review}. This argument does not extend
to maximum delay, since each resulting batch pays its own maximum
waiting time.

\paragraph{Dynamic bin packing and busy-time scheduling.}
MinUsageTime dynamic bin packing and busy-time scheduling group items
or jobs under capacity constraints and charge the total time for which
bins or machines are used
\cite{li2014dynamic,kamali2015renting,li2016dynamic,tang2016firstfit,
ren2017competitiveness,flammini2010busy,shalom2014busy,mertzios2015busy,
ren2020heterogeneous}.
There is a temporal resemblance to our objective: the maximum waiting
time of a batch is the length of the union of its requests' waiting
intervals. In LAD, however, these intervals all end at a common service
time chosen by the algorithm. Capacity bounds the total number of
requests in the batch. In dynamic bin packing, departures free capacity
that later arrivals can reuse. Thus the shared busy-time interpretation
does not by itself transfer algorithms or lower bounds to LAD, which
also charges a distance-dependent service cost.

\paragraph{Learning-augmented LAD.}
Lu et al.~\cite{lu2026learninglad} study uncapacitated LAD with additive
waiting costs, including deterministic and randomized learning-augmented
algorithms using online suggestions of service lengths. We follow
their consistency benchmark of any feasible advised schedule. With
capacity, our advice also specifies which requests a service completes.
It is revealed only at the proposed service time. Our analysis must
account for requests that remain pending when the algorithm switches
schedules, even if the schedule it starts following has already served
them. Section~\ref{sec:online-advice} gives the switching rule and the
residual-advice argument that controls these costs.

\paragraph{Organization.}
Section~\ref{sec:model} introduces the model.
Section~\ref{sec:line-path-load} develops the constant-competitive LAD
algorithm, from the failure of independent distance classes to the
Next Fit construction, its analysis, and the arbitrary-size extension.
Section~\ref{sec:online-advice} gives the online-advice algorithm and
its consistency and robustness analysis. The appendices contain the
fixed-location results, the baseline upper bound, capacity transfer,
and supporting proofs for advice and arbitrary request sizes.

\section{Preliminaries}
\label{sec:model}
In LAD, each request $r$ has an arrival time $a_r\ge0$ and a distance
$x_r\ge0$ from the origin on a rooted half-line, both revealed at
arrival. We call $x_r$ its radius. Requests remain distinct even when
they arrive at the same location and time. The capacity is a known
integer $k\ge1$. A service $(Q,t)$ completes a nonempty set $Q$ of at
most $k$ pending requests at time $t$. Its service and total costs are
\begin{equation}\label{eq:maximum-objective}
 f(Q)=\max_{r\in Q}x_r,\qquad
 c(Q,t)=f(Q)+\max_{r\in Q}(t-a_r).
\end{equation}
The objective is to minimize the sum of these costs. Each request must
be served exactly once. Services are instantaneous, and several
separately paid batches may be served at the same time. Capacity is
shared across all locations in a batch; a completed batch cannot accept
later requests. At one location, LAD is capacitated TCP acknowledgment.

Inputs are finite, but their number of requests and end time are not
announced. All arrivals at a timestamp are revealed before services at
that time. An online algorithm uses only the revealed history and must
eventually serve every request. We write $\ALG(I)$ for its total cost on
input $I$ and $\OPT(I)$ for the minimum cost of a feasible offline
schedule. A deterministic algorithm is $R$-competitive if
$\ALG(I)\le R\OPT(I)$ for every input, with no additive constant.
Unless stated otherwise, we assume $k\ge2$ and positive radii: for
$k=1$, immediate singleton service is optimal, and zero-radius requests
can always be served immediately for free. Section~\ref{subsec:arbitrary-sizes}
extends the model to arbitrary indivisible request sizes.

\section{An \texorpdfstring{$O(1)$}{O(1)}-competitive algorithm for LAD}
\label{sec:line-path-load}

In the uncapacitated algorithm of Lu et al.~\cite{lu2026maximum},
a dynamic program sets a deadline for the current pending block, which
is then served in one batch. Its analysis uses the fact that batches
with overlapping arrival spans can be merged without increasing cost.
Capacity changes both steps: a pending block may contain more than $k$
requests, and merging may be infeasible. Splitting a block at its deadline
can repeat its maximum delay across several batches; serving only part
of it leaves old requests for later services. Neither modification
inherits the original cost bound.

Before developing the new algorithm, we examine the loss incurred by
serving distance classes independently. The instance below shows that
capacity alone does not explain this loss: all its requests fit in one
batch, but the algorithm fails to share their waiting costs.

\subsection{Why independent distance classes lose a logarithmic factor}
\label{subsec:class-baseline}

The dyadic-class algorithm first rounds each positive radius upward to a
power of two. At each rounded radius $x$, it independently runs the
full-or-age algorithm with setup $x$: whenever at least $k$ requests
are pending, it serves the $k$ oldest; otherwise, it serves all pending
requests when the oldest reaches age $x$. Full batches take priority
over the age rule; classes never share a service.

Algorithm~\ref{alg:independent} states this baseline.
\begin{algorithm}[!ht]
\caption{The independent-class LAD baseline}\label{alg:independent}
\begin{algorithmic}[1]
\Require Capacity $k$
\State Serve zero-radius requests immediately for free.
\State Assign every other request $r$ to its rounded radius $y_r=2^{\lceil\log_2x_r\rceil}$.
\State At each rounded radius $y$, run \Call{FullOrAge}{$k,y$} from Algorithm~\ref{alg:fixed-location}.
\State Execute the resulting services independently; do not merge groups.
\end{algorithmic}
\end{algorithm}

\begin{proposition}\label{prop:line-maximum-baselines}
For $k\ge2$, the independent-class algorithm is $O(\log k)$-competitive.
\end{proposition}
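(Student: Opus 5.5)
The plan is to bound the cost of each class's \textsc{FullOrAge} run by charging it to $\OPT$, separating full batches from age-triggered batches. Every batch produced at rounded radius $y$ costs at most $2y$. Its service cost is at most $y$, and no request in it has waited longer than $y$, because the age rule fires as soon as the oldest pending request of the class reaches age $y$. Throughout we use $x_r\le y_r<2x_r$.

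\emph{Full batches.} Every request belongs to exactly one batch of $\OPT$, and each such batch contains at most $k$ requests and pays at least the largest radius in it. Hence $\OPT\ge\frac1k\sum_r x_r\ge\frac1{2k}\sum_y n_y y$, where $n_y$ is the number of requests of rounded radius $y$. Class $y$ opens at most $n_y/k$ full batches, each costing at most $2y$. So the total cost of full batches is at most $\frac2k\sum_y n_y y\le 4\,\OPT$.

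\emph{Age batches: disjoint windows.} Consider an age batch of class $y$ served at time $t$, and let $r$ be its oldest request, so $t=a_r+y$. At time $t$ the class becomes empty, because the batch takes all pending requests, of which there are fewer than $k$. Hence the oldest request of the next age batch in the same class arrives after $t$. Consequently the arrival windows $[a_r,a_r+y]$ of the oldest requests of distinct age batches in one class are pairwise disjoint.

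\emph{Charging to $\OPT$.} We charge each age batch to the $\OPT$ batch $B$, served at time $s$, that contains its oldest request $r$. Let $X=f(B)$ and let $W$ be the maximum wait in $B$, so $B$ costs $X+W$. All requests of $B$ arrive in $[s-W,s]$. By disjointness of the windows, at most $W/y+1$ age batches of class $y$ charge $B$. Moreover, distinct age batches have distinct oldest requests in $B$, so at most $k$ age batches charge $B$ in total, and at most $k$ in each class. A class $y$ can charge $B$ only if some $r\in B$ has $y_r=y$, which forces $y<2X$. The charge of class $y$ to $B$ is therefore at most $2y\min\{k,\,W/y+1\}$. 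We sum this over powers of two $y<2X$ in three ranges:
\begin{itemize}
\item For $y\le W/k$, use the bound $2yk$. These terms form a geometric series totalling at most $4W$.
\item For $y>W$, the count is less than $2$, hence at most $1$, so the charge is at most $2y$. These terms form a geometric series bounded by $8X$.
\item For $W/k<y\le W$ there are at most $\log_2k+1$ classes. Each is charged at most $2y(W/y+1)\le 4W$.
\end{itemize}
Altogether $B$ receives $O(\log k)(X+W)$. Summing over the batches of $\OPT$ and adding the full-batch bound gives $\ALG\le O(\log k)\OPT$. Zero-radius requests are free for both.

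The main obstacle is the middle range. The naive bound $W/y+1$ per class would multiply $W$ by the number of classes, which is unbounded. The key point is to use capacity twice: once through the cap of $k$ on charges, which makes the small-$y$ classes geometric, and once through the constraint $y<2X$ together with the count of at most $1$ for $y>W$, which makes the large-$y$ classes geometric. This leaves only $O(\log k)$ classes that pay $\Theta(W)$ each.
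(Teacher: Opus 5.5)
Your proof is correct. It differs from the paper's in how each class's cost is bounded.

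\textbf{The paper's route.} The paper never analyzes individual batches of \textsc{FullOrAge}. It restricts every offline batch to each rounded class $x$ and serves that restriction either together at its last arrival (cost at most $x+\Delta$) or as singletons (cost at most $kx$). It then applies the factor-two TCP guarantee of Proposition~\ref{prop:fixed-unit-two}, which rests on the greedy short-prefix exchange lemma, against this per-class comparison schedule. Rounding costs another factor two. The sum $\sum_x\min\{x+\Delta,kx\}$ is split into the same three ranges you use.

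\textbf{Your route.} You charge directly. Full batches go to the volume bound, and age batches go to the offline batch containing their oldest request. Disjointness of the age windows within a class gives the count $\min\{k,W/y+1\}$, which plays the role of the paper's $\min\{kx,x+\Delta\}/x$.

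\textbf{What each buys.} Your version is self-contained and needs no exchange argument. It also gives a somewhat smaller explicit constant: about $4\log_2k+12$, against roughly $8\lceil\log_2k\rceil+16$ from the paper's argument as written. The paper's version is modular. It treats full and age batches uniformly, and it would apply unchanged to any per-class TCP algorithm with a constant competitive ratio. Yours relies on two specific facts about \textsc{FullOrAge}: full batches contain exactly $k$ requests, and an age batch empties its class queue.
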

The proof of the upper bound appears in
Appendix~\ref{sec:independent-class-baseline}.

\paragraph{A matching lower-bound instance.}
Let $m=\lfloor\log_2(k+1)\rfloor$ and $R=2^{m-1}$. For each
$i=0,\ldots,m-1$, set $x_i=2^i$ and release one request at radius
$x_i$ at each time $2jx_i$, where $j=0,\ldots,R/x_i-1$.
The total number of requests is
\[
 \sum_{i=0}^{m-1}\frac{R}{x_i}=2^m-1\le k.
\]
At radius $x_i$, every request reaches age $x_i$ before the next
request in its class arrives. The algorithm therefore serves it alone
at time $(2j+1)x_i$, paying $2x_i$. Each class contributes $2R$, so
$\ALG=2mR$.

Offline can serve all requests in one batch at the last arrival time
$2R-2$. This batch respects capacity, has maximum radius $R$ and
maximum waiting time $2R-2$, and costs $3R-2$. Consequently,
\[
 \frac{\ALG}{\OPT}\ge\frac{2mR}{3R-2}\ge\frac{2m}{3}
 =\Omega(\log k).
\]
Thus the logarithmic bound is tight for this algorithm. The loss occurs
even though all requests in the entire input fit in one batch:
independent class timers repeatedly pay waiting costs that a shared
batch pays only once.

\subsection{The main idea: Next Fit along nested intervals}
\label{subsec:line-main-idea}

We therefore control which groups can join a future batch while they
wait. The challenge is to retain sharing across distances without making
small groups wait too long for a distant request. Counting requests
alone ensures that a batch fits, but does not pay for groups that must
be served separately. Our \emph{path load} accounts for both group size
and waiting time scale. A failed admission test opens a new batch and
provides a charge for its cost. Later groups inside that batch's interval
try to join it, rather than returning to its ancestors: this is the
Next Fit idea along nested intervals. We explain the rule and its
analysis through an example before giving the formal proofs.

\paragraph{Giving a group a time scale.}
For this explanation, suppose radii are powers of two. For each radius
$y$, collect the requests arriving in $[b-y,b)$, where $b$ is a multiple
of $y$. At time $b$, this group is completely known. Serve full batches
of $k$ requests immediately and retain the remainder as a single group,
called a \emph{cohort}. A full batch costs at most $2y$: its service
cost is $y$ and its maximum delay is at most $y$. These batches are
paid for by the volume lower bound
\[
 \OPT\ge \frac1k\sum_r \text{radius}(r).
\]
Indeed, a batch's largest radius is at least the sum of its requests'
radii divided by $k$.

Give the remaining cohort the \emph{home interval} $[b,b+y)$. It could
be served at the right endpoint for cost at most $3y$. The home
intervals are nested or disjoint, which provides a way to combine
different radii: a cohort may instead join the future batch of a
containing interval. We call an interval selected to open a batch a
\emph{root}, and call that root the \emph{owner} of the cohorts assigned
to it. Its batch is served at its right endpoint. Larger intervals
are processed first when their left endpoints coincide. Thus a
containing root is already known when a cohort decides whether to join
it. Assignments and service times are never revised.

\paragraph{Why count load rather than requests?}
Suppose a root has interval length $R$. A cohort of $M$ requests with
home length $y$ reserves $MR/k$ units of area under this interval.
Spread this area uniformly over its own home. Its load there is
\[
 \frac{MR}{ky}=\frac{M}{k}\cdot\frac{R}{y}.
\]
The first factor is the fraction of capacity used; the second compares
the owner's time scale with the cohort's own time scale. A short home
therefore contributes more load than a long home with the same number
of requests. We allow the assignment only if the total load stays at
most one throughout the new home. The total area is then at most $R$.
Since every assigned request contributes area $R/k$, at most $k$
requests can be assigned to this root.

When the test fails, the cohort opens its own root. Later cohorts
inside this new interval test admission to the new root, rather than
returning to its ancestors. This is the Next Fit idea along each path
of nested intervals. Disjoint subintervals can still use the earlier
root. This distinction will be essential when we account for the cost
of rejected cohorts.
Section~\ref{subsec:line-first-fit} shows that trying older roots first
can destroy the certificate used to compare these costs with offline batches.

\paragraph{A worked example.}
Let $k=8$. Two requests of radius $8$ arrive at time $0$; two of radius
$4$ arrive at time $4$; and at time $8$ one request of radius $2$ and
one of radius $4$ arrive. Figure~\ref{fig:path-load-example} follows all
six requests. No full batch is removed in preprocessing.

At time $8$, the radius-$8$ cohort opens a root with home $[8,16)$,
and its two requests contribute load $2/8=1/4$. The radius-$4$ cohort
has home $[8,12)$ and adds load $2\cdot8/(8\cdot4)=1/2$ there.
It is accepted. At time $10$, the radius-$2$ request reveals its home
$[10,12)$. Its prospective load is $8/(8\cdot2)=1/2$, taking the
total on this home to $5/4$. It is rejected even though a fifth request
would fit in the batch. Instead, it opens a root and is served at
time $12$, for cost $2+(12-8)=6$.

At time $12$, the last radius-$4$ request reveals home $[12,16)$.
This home is disjoint from $[8,12)$, and its load test is only
$1/4+1/4=1/2$. It joins the first root. The resulting five-request
batch is served at time $16$, for cost $8+(16-0)=24$. The algorithm's
total cost is therefore $30$.

\begin{figure}[!ht]
\centering
\begin{tikzpicture}[x=0.95cm,y=0.82cm,font=\small]
\node[anchor=west,font=\small\bfseries] at (0,6.3)
  {(a) Cohort homes and their assigned batches ($k=8$)};
\node[anchor=east] at (2.7,5.5) {root $u$};
\filldraw[fill=figblue!12,draw=figblue,thick] (3,5.2) rectangle (13,5.8);
\node at (8,5.5) {$[8,16)$: two requests of radius $8$};
\node[anchor=east] at (2.7,4.5) {accepted};
\filldraw[fill=figblue!12,draw=figblue] (3,4.2) rectangle (8,4.8);
\filldraw[fill=figblue!12,draw=figblue] (8,4.2) rectangle (13,4.8);
\node at (5.5,4.5) {two requests of radius $4$};
\node at (10.5,4.5) {one request of radius $4$};
\node[anchor=east] at (2.7,3.45) {new root $v$};
\filldraw[fill=figorange!12,draw=figorange,dashed,thick]
  (5.5,3.15) rectangle (8,3.75);
\node at (6.75,3.45) {radius $2$};
\node[anchor=west,align=left,font=\footnotesize] at (8.35,3.45)
  {At $t=10$: $\frac14+\frac12+\frac12>1$.\\Reject from $u$; open $v$.};
\draw[paper arrow] (3,2.7)--(13.6,2.7) node[right] {$t$};
\foreach \x/\t in {3/8,5.5/10,8/12,13/16}{
  \draw (\x,2.64)--(\x,2.76);
  \node[below,font=\footnotesize] at (\x,2.64) {$\t$};}
\node[anchor=west,font=\small\bfseries] at (0,1.7)
  {(b) Final load of the root $u$};
\begin{scope}[shift={(3,-1.0)}]
  \fill[figblue!16] (0,0)--(0,1.2)--(5,1.2)--(5,0.8)--(10,0.8)--(10,0)--cycle;
  \draw[figblue,thick] (0,1.2)--(5,1.2)--(5,0.8)--(10,0.8);
  \draw[dashed] (0,1.6)--(10,1.6) node[right] {$1$};
  \draw[paper arrow] (0,0)--(10.6,0) node[right] {$t$};
  \draw[paper arrow] (0,0)--(0,1.85) node[above] {load};
  \node at (2.5,0.6) {$\frac14+\frac12=\frac34$};
  \node at (7.5,0.4) {$\frac14+\frac14=\frac12$};
  \foreach \x/\t in {0/8,5/12,10/16}
    \node[below,font=\footnotesize] at (\x,0) {$\t$};
\end{scope}
\end{tikzpicture}
\[
 \text{Total area}=4\cdot\frac34+4\cdot\frac12=5
 =\frac88\cdot\text{(five assigned requests)}\le8.
\]
\caption{The six-request example from Section~\ref{subsec:line-main-idea}.
The radius-$8$ requests arrive at time $0$, the first radius-$4$ pair at
time $4$, and the other two requests at time $8$. A home starts when
its cohort becomes available for assignment, rather than at the
requests' actual arrival. The orange cohort is rejected at time $10$
and served at time $12$ for cost $6$; the five blue requests are served
at time $16$ for cost $24$. A feasible offline batch at time $8$ costs
$16$. Disjoint homes use different portions of the root's
load budget; only accepted cohorts contribute to the plotted load.}
\label{fig:path-load-example}
\end{figure}

Offline can serve all six requests at time $8$, for cost $8+8=16$.
Compared with this feasible solution, the algorithm pays more to fix
its groups and service times online. Leaving a capacity slot unused
preserves a load budget for subsequent groups.

\paragraph{What pays for opening another batch?}
The example has load $3/4$ already present throughout the rejected
home $[10,12)$. The area under this old load is $3/2$, enough to pay
its length $2$ within a constant factor. More generally, a failed test
means that either the incoming cohort contributes at least $1/2$, or
the old load exceeds $1/2$ on its home. In the first case, charge the
new root's length to the requests that created it. In the second,
charge it to twice the old load integrated over its home. A root's
actual batch cost is at most three times its length, so it suffices
to pay for these lengths.

The second charge must not reuse the same area at every smaller scale.
This is exactly why we use Next Fit: the first rejected home becomes
responsible for later rejections inside it. The homes that charge a
fixed parent root are disjoint. A request assigned to that parent
therefore pays at most twice its reserved area in total, even if many
smaller roots are subsequently opened. Descendants of a child use the
child's assigned requests, not the parent's requests.

\paragraph{Why are these charges lower bounds on the offline cost?}
We first charge online roots to requests and then bound these charges
\emph{inside each feasible offline batch}. The second comparison is
necessary because a batch may combine requests from different roots
without taking whole cohorts.

Fix an offline batch whose largest radius is $Y$ and whose arrival
span is $D$. Its cost is at least $Y+D$. Requests with owners of length
at most $Y$ have small reserved-area charges, whose sum is controlled
by the batch's capacity. For larger owners, Next Fit gives a
\emph{unique-owner property}: inside each maximal home represented in
the batch, all requests owned above that home share the same owner.
Its load is at most one, so their area is bounded by the home length.
These maximal homes are disjoint and lie in a time interval of length
$D+2Y$.

The own-cohort charges admit a similar division: maximal roots are
controlled by disjoint intervals, and the remaining per-request
charges by capacity. Both certificates are therefore bounded by a
constant times the offline batch cost. Summing over its partition
gives the desired comparison with $\OPT$. The formal analysis below
establishes these bounds, optimizes the threshold, and accounts for
preprocessing and radius rounding.

\begin{theorem}\label{thm:line-path-load}
There is a deterministic online algorithm for LAD with
capacity $k$ whose competitive
ratio is at most $36+12\sqrt3<56.785$. For radii on a fixed scaled dyadic grid, the ratio is at most
\[
 C_*:=18+6\sqrt3<28.393.
\]
These deterministic guarantees hold also for adaptively generated inputs.
\end{theorem}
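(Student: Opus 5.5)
We prove the grid bound $C_*$ first and then pay a factor $2$ for rounding arbitrary radii up to the grid. Rounding multiplies each service cost by at most $2$. A feasible offline schedule for the rounded instance costs at most twice the original $\OPT$, since rounding changes only service costs and waiting times are unchanged. This turns $C_*$ into $2C_*=36+12\sqrt3$.

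On the grid we formalize the algorithm already described. Cohorts are formed on aligned windows $[b-y,b)$. Full batches are peeled off immediately and paid by the volume bound $\OPT\ge\frac1k\sum_r x_r$. Each remaining cohort receives its home $[b,b+y)$ and tests admission to the nearest enclosing root on its path, under a load threshold $\theta$. We will optimize $\theta$ at the end; the value $1$ from the overview becomes $\theta$, and $\theta=1$ is the capacity-safe case. Determinism and dependence only on revealed history give the claim about adaptive inputs immediately.

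\textbf{Step 1 (feasibility and batch cost).} The load invariant bounds the area under a root of length $R$ by $R$. Each assigned request contributes area $R/k$, so at most $k$ requests are assigned. Every assigned request arrived no earlier than $b-y\ge (\text{root start})-R$, by nesting of dyadic intervals. Hence the root's batch costs at most $R$ for the radius, plus $2R$ for waiting, which is $3R$.

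\textbf{Step 2 (charging roots).} A root opens only when a test fails. Then either the new cohort's own load exceeds $\theta/2$, or the existing load exceeds $\theta/2$ on the home. In the first case we charge the length $y$ to the cohort's requests, at most $\frac{2}{\theta}\cdot\frac{y}{k}$ per request. In the second case we charge $y$ to $\frac{2}{\theta}$ times the old load integrated over the home, which is area reserved by requests of the parent root. The Next Fit rule makes the homes charging a fixed parent pairwise disjoint: once a home is rejected, later cohorts inside it test the child root instead. So each request pays at most $\frac{2}{\theta}$ times its reserved area $R_{\mathrm{owner}}/k$ in total.

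\textbf{Step 3 (the offline certificate).} This is the main obstacle. Fix an offline batch with maximum radius $Y$, arrival span $D$ and cost at least $Y+D$, and bound the total charge on its requests. Requests whose owner length or home length is at most $Y$ contribute at most $k\cdot O(Y)/k$ by capacity. For larger owners, we prove the unique-owner property. Within a maximal home meeting the batch, every request owned above that home has the same owner, because any intermediate root would have captured later cohorts by Next Fit. The load of that owner is at most $\theta$, so the area charged there is at most $\theta$ times the home length. These maximal homes are disjoint and contained in a window of length $D+2Y$, since homes of requests in the batch start within $Y$ of their arrivals and have length $\le$ the relevant scale. Own-cohort charges of maximal roots are handled the same way.

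Summing Steps 1–3 gives a ratio of the form $3\bigl(a+b/\theta\bigr)$ plus the $2$ from full batches, where $a$ and $b$ come from the capacity bound and the window length, and $\theta$ interacts with the capacity slack. We then optimize $\theta$. We expect the optimum to be irrational and to produce $18+6\sqrt3$, for instance from minimizing an expression like $\theta+c/\theta$ with $c$ involving $3$. The delicate part is getting the window constants exactly right. To do this we will carefully track whether homes extend past the offline service time, and use the split between small and large scales at $Y$ versus $2Y$.
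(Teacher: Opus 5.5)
Your outline (rounding, full-batch preprocessing, the $3R$ root bound, a two-case charge for rejected cohorts made additive by Next Fit, and the unique-owner certificate over a window of length $D+2Y$) is the right architecture and gives \emph{some} constant. The theorem, however, is about the specific constant, and your parametrization cannot reach it. Suppose $\theta$ is the admission threshold and a failed test is split symmetrically at $\theta/2$. Carrying your own charges through an offline batch, heavy roots cost at most $3[(D+2Y)+2\alpha Y/\theta]$ and light roots at most $3\cdot\frac{2}{\theta}[\theta(D+2Y)+\alpha Y]$, where $\alpha$ is the fraction of the batch's capacity used by residual requests. At $\alpha=1$ the coefficient of $Y$ is $18+12/\theta$, which is decreasing in $\theta$. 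The best choice is $\theta=1$, giving $30$ on the grid and $60$ in general, and no irrational optimum appears.

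The missing idea is to keep the admission threshold at $1$ and to use $\theta$ only in the analysis, splitting a failure asymmetrically. A new root $v$ with parent root $u$ is heavy if its own term satisfies $M_vR_u/(ky_v)\ge\theta$; otherwise the old load exceeds $1-\theta$ throughout $I_v$. Heavy roots are paid by $\gamma_r=y_v/M_v\le R_u/(k\theta)$. Light roots have total width at most $W/(1-\theta)$, where $W$ is the total $\beta$ weight. The two modes draw on certificates of different sizes. The light charge consumes the whole bound $D_B+(2+\alpha_B)Y_B$, while a heavy root adds only $\alpha_BY_B/\theta$ beyond the disjoint maximal roots. This gives $C(\theta)=3\bigl(2+1/\theta+3/(1-\theta)\bigr)$, minimized at $\theta=1/(1+\sqrt3)$ with value $18+6\sqrt3$.

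Second, your ``plus the $2$ from full batches'' suggests adding the full-batch cost on top, which would leave you at $C(\theta)+2$ even with the right split. The exact constant requires that preprocessed and residual requests share the $k$ slots of the same offline batch. Preprocessing then contributes at most $4(1-\alpha_B)Y_B$. The coefficient $3/\theta+3/(1-\theta)-4$ of $\alpha_B$ is positive, so the maximum over $\alpha_B\in[0,1]$ is attained at $\alpha_B=1$, where the preprocessing term vanishes.

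Finally, your per-request heavy charge is misstated. A cohort whose own load exceeds $\theta/2$ gives $y_v/M_v<\frac{2}{\theta}\cdot\frac{R_u}{k}$, with $R_u$ the parent's length, not $\frac{2}{\theta}\cdot\frac{y}{k}$; a singleton cohort pays $y_v$ per request. Every request in the batch has home length at most $Y$, so with your bound the Step~3 clause ``home length at most $Y$ \ldots\ by capacity'' would wrongly pay for all own-cohort charges. Bounding $R_u\le Y_B$ needs a represented heavy ancestor, so only nonmaximal heavy roots can be paid by capacity. Maximal ones must be paid by the disjointness of their homes. This includes top-level roots, which never take an admission test and which your Step~2 omits.
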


\subsection{Strict time rounding and local preprocessing}

We write $a_r$ for the arrival time of request $r$ and $x_r>0$ for its
radius. Each batch contains at most $k$ requests.
For a fixed $\lambda>0$, the radius grid is
$\{\lambda2^j:j\in\mathbb Z\}$. The deterministic algorithm uses
$\lambda=1$; we keep the scale explicit for the randomized extension.
The algorithm rounds $x_r$ upward to the smallest grid value
$y_r\ge x_r$ and gives $r$ the virtual arrival
\[
 b_r=y_r\bigl(\lfloor a_r/y_r\rfloor+1\bigr).
\]
Thus $0<b_r-a_r\le y_r$. Its \emph{home interval} is
$H_r=[b_r,b_r+y_r)$. Any two home intervals are disjoint or one contains the other.
At time $b$, the full cohort with home $[b,b+y)$ consists of the requests
of rounded radius $y$ that actually arrived in $[b-y,b)$. The cohort is
therefore known from the strict past. An arrival at time $b$ cannot change
it, even when arrivals depend on earlier service decisions.

At a cohort's virtual arrival, the algorithm serves full groups of
$k$ requests in a fixed deterministic order. It retains the remaining
requests as one \emph{residual cohort}, if any remain. Thus every
preprocessed batch is full, and each residual cohort fits in one batch
(line~\ref{line:full-cohort} of Algorithm~\ref{alg:path-load}).

Every preprocessed batch at a home of width $y$ has actual waiting at
most $y$ and radius at most $y$. Its cost is at most $2y$, and hence at
most $4y/k$ times its cardinality. If $P$ denotes the set of preprocessed
requests, then
\begin{equation}\label{eq:line-prepaid}
 \ALG_{\mathrm{pre}}\le \frac4k\sum_{r\in P}y_r.
\end{equation}
We use the loose factor four so that the same accounting also applies
to arbitrary sizes in Appendix~\ref{sec:arbitrary-sizes}.
The bound assigns preprocessing cost to individual requests. Later we
combine these charges with those of residual requests inside each
offline batch, using their shared capacity constraint.

\subsection{The path-load rule}

We write $v$ for a nonempty residual cohort, $I_v=[b_v,b_v+y_v)$
for its home, and $M_v$ for its cardinality. The algorithm processes
homes with the same left endpoint in decreasing order of width. Some homes are selected as
\emph{roots}, each representing one future batch. Every cohort is
assigned as a whole to a selected root containing its home. We call
that root the cohort's \emph{owner}; these interval roots are distinct
from the half-line root at the origin.

When cohort $v$ arrives, let $u$ denote its nearest selected strict
ancestor, if one exists, and write $R_u=y_u$. To decide whether $v$ can
join $u$, the algorithm distributes a load $M_vR_u/(ky_v)$ over $I_v$.
Its integral is $M_vR_u/k$, the interval length reserved for these
requests. Since earlier homes meeting $I_v$ contain
it, their contributions are constant throughout $I_v$. Admitting $v$ would give total load
\begin{equation}\label{eq:line-admission}
 L_u(v)=
 \sum_{\substack{w\text{ already assigned to }u\\ I_v\subseteq I_w}}
       \frac{M_wR_u}{ky_w}
       +\frac{M_vR_u}{ky_v}.
\end{equation}
If $u$ exists and $L_u(v)\le1$, the algorithm assigns $v$ to $u$.
Otherwise, $v$ becomes a new root and owns its own cohort
(lines~\ref{line:admit-cohort}--\ref{line:open-root} of
Algorithm~\ref{alg:path-load}). A root $u$ serves all its assigned
requests at time $b_u+R_u$. This service time and all earlier assignments
remain fixed.

All information used in~\eqref{eq:line-admission} is available online.
An ancestor home is revealed before its descendants, with the specified
order handling equal left endpoints. A newly revealed descendant starts
strictly before its ancestor's right endpoint, and hence its selected ancestor
has not served. Homes starting at an endpoint are outside the interval
ending there; simultaneous arrivals and due services cause no ambiguity.

All tie choices are fixed in advance. Empty cohorts create no events.

\begin{lemma}\label{lem:line-load-capacity}
Every root batch is capacity feasible, and its actual cost is at most
$3R_u$. Every residual request is eventually served exactly once.
\end{lemma}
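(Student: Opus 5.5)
We split the statement into three parts: capacity, the cost of $3R_u$, and service of every request exactly once.

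\emph{Capacity.} Fix a root $u$ with interval $I_u$ of length $R_u$, and let $\ell_u(t)$ be the total load on $I_u$ after all assignments. The root's own cohort contributes $M_uR_u/(ky_u)=M_u/k$ at every point of $I_u$. Each cohort $w$ assigned to $u$ contributes $M_wR_u/(ky_w)$ on $I_w\subseteq I_u$, so
\[
\int_{I_u}\ell_u=\sum_{w}\frac{M_wR_u}{k},
\]
where the sum includes $u$. It therefore suffices to show $\ell_u\le1$ pointwise; then the area is at most $R_u$, and the number of requests is at most $k$.

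We argue by induction over the admission events of $u$. For the root's own cohort, $M_u\le k-1$ after preprocessing, so its load is below $1$. When a cohort $v$ is admitted, every earlier cohort $w$ whose home meets $I_v$ contains $I_v$, by nestedness and the processing order. Later-arriving homes that meet $I_v$ lie inside it. Hence just after admission, the load throughout $I_v$ equals $L_u(v)\le1$. The main obstacle is to check that later admissions inside $I_v$ never push the total above one. Each later admission of some $v'\subseteq I_v$ is itself tested with $L_u(v')$, and this sum includes $v$ together with every ancestor of $v'$ assigned to $u$. So the pointwise load at any point is the load at the most recently admitted, innermost home containing that point, and it is at most one. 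We need $u$ to be the nearest selected ancestor for the test to be applied against $u$ at all. This is guaranteed, because cohorts inside a newer child root test that child instead (Next Fit), and such cohorts never add load to $u$.

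\emph{Cost.} The batch is served at $b_u+R_u$, and every assigned cohort $w$ satisfies $I_w\subseteq I_u$. So each assigned request has rounded radius $y_w\le R_u$, which gives service cost at most $R_u$. Its actual arrival satisfies $a_r\ge b_w-y_w\ge b_u-R_u$; the inequality holds because $b_w\ge b_u$ and $y_w\le R_u$, with the worst case $w=u$. The waiting time is then at most $(b_u+R_u)-(b_u-R_u)=2R_u$, so the total cost is at most $3R_u$.

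\emph{Exactly once.} Each request belongs to exactly one cohort, namely the one given by its rounding and its $b_r$. It is either preprocessed or part of the residual cohort. The residual cohort is assigned exactly once, at time $b_v$. The assignment goes to an unserved root: the online-availability remark ensures that the ancestor's right endpoint is strictly later than $b_v$. Roots serve at a finite time, so every request is served. Assignments are never revised, so no request is served twice. Inputs are finite, so the process terminates.
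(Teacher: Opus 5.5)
Your proof is correct and follows the same argument as the paper. You bound the load pointwise by one using the chain of nested assigned homes, where the innermost home's admission test equals the final load at that point, and integrate to get at most $k$ requests. The $3R_u$ cost bound comes from $y_w\le R_u$ and $a_r\ge b_w-y_w\ge b_u-R_u$. Exactly-once service follows from each residual cohort having a unique owner whose service time $b_u+R_u$ is finite and strictly later than $b_v$; the paper's proof takes this timing fact from the discussion preceding the lemma, while you state it explicitly.
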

\begin{proof}
We first prove capacity by integrating the load and then bound the
waiting time of each assigned request. For the final assignment to
root $u$, define
\[
 f_u(t)=\sum_{w\text{ assigned to }u}
           \frac{M_wR_u}{ky_w}\mathbf1_{I_w}(t).
\]
Homes containing a fixed point form a chain and are processed in ancestor
order. Every accepted cohort checks the existing chain sum plus its own
term. The root's initial term is $M_u/k\le1$. Once a new selected root
appears on the chain, subsequent descendants use that nearer root.
Consequently $f_u(t)\le1$ almost everywhere. Integration gives
\[
 \frac{R_u}{k}|\{r:r\text{ is assigned to }u\}|
   =\int_{I_u}f_u(t)\,dt\le R_u,
\]
which proves feasibility without splitting any request.

If $r$ has home $I_w\subseteq I_u$, then $y_w\le R_u$ and $b_w\ge b_u$.
Its actual arrival satisfies $a_r\ge b_w-y_w\ge b_u-R_u$. At time
$b_u+R_u$, its waiting is at most $2R_u$ and its radius is at most $R_u$.
The root batch therefore costs at most $3R_u$. Every residual cohort has
one owner with a finite prescribed service time, proving the last claim.
\end{proof}

Algorithm~\ref{alg:path-load} summarizes the complete event-driven rule.
\begin{algorithm}[!ht]
\caption{Path-load aggregation for LAD}\label{alg:path-load}
\begin{algorithmic}[1]
\Require Capacity $k$; one fixed scale $\lambda>0$; deterministic tie orders
\Statex Use $\lambda=1$ for the deterministic version; for Corollary~\ref{cor:line-random-shift},
draw $U$ uniformly from $[0,1)$ once and set $\lambda=2^U$.
\State Initially no requests, cohorts, or selected roots are stored.
\Statex \textbf{On arrival of request $r$ at $a_r$:}
\State $y_r\gets\min\{\lambda2^j:\lambda2^j\ge x_r\}$;
       $b_r\gets y_r(\lfloor a_r/y_r\rfloor+1)$
\State Store $r$ in the cohort with home $[b_r,b_r+y_r)$.
\Statex \textbf{At a virtual arrival time $b$:}
\For{nonempty homes $[b,b+y)$ in decreasing $y$}
  \State $Q\gets$ the requests stored for this home
  \While{$|Q|\ge k$}
    \State Serve the first $k$ requests of $Q$ at $b$ and remove them from $Q$.\label{line:full-cohort}
  \EndWhile
  \If{$Q\ne\varnothing$}
    \State Create cohort $v$ with $I_v=[b,b+y)$, $y_v=y$, and $M_v=|Q|$.
    \State $u\gets$ the selected strict ancestor of $I_v$ of minimum width, if any
    \If{$u$ exists}
      \State $L\gets\displaystyle\sum_{\substack{w\text{ assigned to }u\\I_v\subseteq I_w}}
        \frac{M_wR_u}{ky_w}+\frac{M_vR_u}{ky_v}$
    \EndIf
    \If{$u$ exists and $L\le1$}
      \State Assign the entire cohort $v$ to $u$.\label{line:admit-cohort}
    \Else
      \State Select $v$ as a root, set $R_v=y_v$, and assign $v$ to itself.\label{line:open-root}
      \State Schedule its service at $b+y_v$.
    \EndIf
  \EndIf
\EndFor
\Statex \textbf{At the service time $b_u+R_u$ of a root $u$:}
\State Serve all requests assigned to $u$ in one batch; retire $u$.
\end{algorithmic}
\end{algorithm}

\subsection{Two additive certificates}

A rejected cohort opens another batch, whose cost must be compared
with the offline solution. We use two request charges. The first
records the load already assigned to an owner; it will pay for
rejections caused by that load. The second pays for a new root when
its own cohort contributes a large load. Each charge is bounded inside
an arbitrary offline batch. Summing over the offline partition bounds the total charge, even when
the batches' arrival spans overlap.

We fix any feasible offline partition into batches. Each batch $B$
can be served at its latest arrival, giving minimum cost $X_B+D_B$,
where
\[
 X_B=\max_{r\in B}x_r,\qquad
 Y_B=\max_{r\in B}y_r,\qquad
 D_B=\max_{r\in B}a_r-\min_{r\in B}a_r.
\]
We also write $\alpha_B=|B\setminus P|/k\in[0,1]$ for the
fraction of capacity used by residual requests in $B$. The number of preprocessed requests in $B$ is at most $k(1-\alpha_B)$.
All residual homes represented in $B$ lie inside
$[\min_B a_r,\max_B a_r+2Y_B]$. Thus every disjoint collection of such
homes has total length at most $D_B+2Y_B$. This observation compares
virtual homes directly with the \emph{original} arrival span.

Each residual request assigned to $u$ receives weight
$\beta_r=R_u/k$; preprocessed requests receive weight zero.
We write $W=\sum_r\beta_r$ for their total weight.

\begin{lemma}\label{lem:line-beta}
For every feasible offline batch $B$,
\[
 \sum_{r\in B}\beta_r\le D_B+(2+\alpha_B)Y_B.
\]
\end{lemma}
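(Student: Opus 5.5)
The plan is to split the residual requests of $B$ according to the width of their owner, with threshold $Y_B$. The two classes will be bounded by the two terms $\alpha_BY_B$ and $D_B+2Y_B$. Preprocessed requests have $\beta_r=0$ and can be ignored.

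\emph{Small owners.} Suppose a residual request $r\in B$ has an owner $u$ with $R_u\le Y_B$. Then $\beta_r=R_u/k\le Y_B/k$. There are at most $|B\setminus P|=\alpha_Bk$ such requests, so together they contribute at most $\alpha_BY_B$. Only the capacity of $B$ is used here.

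\emph{Large owners: the unique-owner property.} Now consider the residual requests $r\in B$ whose owner $u$ has $R_u>Y_B$. The home of $r$ has width $y_r\le Y_B<R_u$, so $I_r$ lies strictly inside $I_u$. In particular $r$ was admitted into $u$ rather than opening its own root.
\begin{itemize}
\item Let $\mathcal J$ be the set of maximal intervals among the homes of these requests. Its members are pairwise disjoint, since homes are laminar.
\item Each $J\in\mathcal J$ has width at most $Y_B$. It lies in $[\min_Ba_r,\max_Ba_r+2Y_B]$, so the observation before Lemma~\ref{lem:line-beta} gives $\sum_{J\in\mathcal J}|J|\le D_B+2Y_B$.
\item Claim: all large-owner requests of $B$ with home inside a fixed $J$ share one owner $u_J$.
\end{itemize}

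The main obstacle is this claim, which is exactly where Next Fit is used. Suppose two such requests, with cohorts $w_1$ and $w_2$, have distinct owners $u_1,u_2$.
\begin{itemize}
\item Both owners are strict ancestors of $J$: each is a home of width greater than $Y_B\ge|J|$ that contains part of $J$, and homes are laminar.
\item Hence $u_1,u_2$ are nested, say $I_{u_2}\subsetneq I_{u_1}$.
\item Root $u_2$ was selected at time $b_{u_2}\le b_J\le b_{w_1}$. At equal left endpoints, wider homes are processed first, so $u_2$ was selected before $w_1$ was processed.
\item Root $u_2$ is retired only at $b_{u_2}+R_{u_2}>b_{w_1}$, because $I_{w_1}\subseteq I_{u_2}$. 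So $u_2$ was still active when $w_1$ was processed.
\item Therefore, when $w_1$ was processed, $u_2$ was a selected strict ancestor of $I_{w_1}$ narrower than $u_1$. The algorithm assigns a cohort only to the nearest selected strict ancestor, so $w_1$ could not have been assigned to $u_1$. This is a contradiction.
\end{itemize}
I would verify the tie-order details carefully: homes starting at $b_u+R_u$ lie outside $I_u$, and equal left endpoints are processed in decreasing width.

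\emph{Area bound.} Fix $J\in\mathcal J$ and its unique owner $u=u_J$. Each cohort $w$ assigned to $u$ with $I_w\subseteq J$ has total weight $M_wR_u/k=\int_{I_w}\frac{M_wR_u}{ky_w}\,dt$, which is its contribution to $\int_J f_u$. The requests of $B$ form a subset of these cohorts. The proof of Lemma~\ref{lem:line-load-capacity} gives $f_u\le1$ almost everywhere, so the large-owner requests of $B$ inside $J$ have total $\beta$ at most $\int_Jf_u\le|J|$.

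\emph{Conclusion.} Summing over the disjoint intervals in $\mathcal J$ bounds the large-owner contribution by $D_B+2Y_B$. Adding the small-owner bound $\alpha_BY_B$ gives
\[
\sum_{r\in B}\beta_r\le D_B+(2+\alpha_B)Y_B.
\]
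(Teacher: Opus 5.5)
Your proof is correct and follows essentially the paper's argument. It uses the same three ingredients: a capacity count for requests whose owners have width at most $Y_B$, the Next Fit unique-owner property inside each maximal home, and integration of $f_u\le1$ over disjoint homes lying in an interval of length $D_B+2Y_B$. The differences are cosmetic: you split by comparing the owner's width with $Y_B$ and prove uniqueness by a pairwise contradiction, whereas the paper splits by whether the owner lies inside or strictly above the maximal represented home $Q_j$ and identifies the common owner as the nearest selected strict ancestor of $Q_j$.
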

\begin{proof}
Let $Q_1,\ldots,Q_m$ be the inclusion-maximal distinct residual homes
represented in $B$. They are disjoint, with total length at most
$D_B+2Y_B$. We split the charge according to whether a request's owner
is contained in its maximal home $Q_j$ or strictly contains $Q_j$.
The first group is bounded by the number of requests in $B$; the
second will be bounded by integrating load over the disjoint $Q_j$.

Owners of the first kind have width at most $Y_B$. Each of their
requests therefore has weight at most $Y_B/k$, and there are at most
$\alpha_Bk$ residual requests in $B$. Their total weight is at most
$\alpha_BY_B$.

All requests of the second kind within a given $Q_j$ have the same owner.
To see this, recall that a containing home is processed before its
descendants, including when their left endpoints coincide. Every strict
ancestor of $Q_j$ was therefore known when $Q_j$ was processed. Let $u$
be its nearest selected strict ancestor, if one exists. A descendant
home can be assigned above $Q_j$ only to $u$: any selected root between
that home and $u$ would instead become its nearest candidate owner.
Assignments are never revised, so roots selected later inside $Q_j$
do not change earlier owners. If $Q_j$ itself is selected, or if $u$
does not exist, the second group is empty. Otherwise,
the weight of this group is at most $\int_{Q_j}f_u(t)\,dt\le|Q_j|$.
Indeed, integrating the density of a whole assigned home contained in
$Q_j$ gives exactly its total $\beta$ weight; selecting only part of
that cohort reduces the weight. Summing over $j$ bounds the second
group by $D_B+2Y_B$. Adding the first group's $\alpha_BY_B$ gives
$D_B+(2+\alpha_B)Y_B$, as claimed.
\end{proof}

An owner may occur in several offline batches. The comparisons remain
additive because each batch sums only its own requests' $\beta$ weights;
the integral bounds these weights without assigning a new charge to
the owner each time.

For the analysis, we fix $\theta\in(0,1)$. A selected root $v$ is
\emph{heavy} if it has no selected ancestor or, writing $u$ for its
parent selected root, if
\[
 \frac{M_vR_u}{ky_v}\ge\theta.
\]
All other roots are \emph{light}. Each request in the cohort that
created a heavy root $v$ receives weight $\gamma_r=y_v/M_v$;
all other requests receive weight zero. Every residual
cohort is nonempty, and hence $M_v>0$. The total $\gamma$ weight equals the
total width of heavy roots.

\begin{lemma}\label{lem:line-gamma}
For every feasible offline batch $B$,
\[
 \sum_{r\in B}\gamma_r
   \le D_B+(2+\alpha_B/\theta)Y_B.
\]
\end{lemma}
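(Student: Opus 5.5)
We mirror the proof of Lemma~\ref{lem:line-beta}. Only requests in cohorts that created heavy roots carry $\gamma$ weight, and each such cohort $v$ has total weight exactly $y_v$, spread as $y_v/M_v$ per request. We split the heavy cohorts represented in $B$ by width: those with $y_v\le Y_B$ that are not inclusion-maximal among residual homes represented in $B$, and those that are.

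\textbf{Maximal homes.} Let $Q_1,\ldots,Q_m$ be the inclusion-maximal residual homes represented in $B$. They are disjoint, with total length at most $D_B+2Y_B$. If $Q_j$ is itself a heavy root, we charge the $\gamma$ weight of its requests in $B$ to $|Q_j|$. This weight is at most the full $y_{Q_j}=|Q_j|$, since taking only part of the cohort reduces it. Summing over $j$ gives at most $D_B+2Y_B$.

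\textbf{Non-maximal heavy homes.} Consider a heavy root $v$ represented in $B$ whose home is strictly inside some $Q_j$. Then $v$ has a selected parent root $u$, and $u$ is a strict ancestor of $v$. Heaviness gives $M_v R_u/(k y_v)\ge\theta$, and hence
\[
 \gamma_r=\frac{y_v}{M_v}\le \frac{R_u}{k\theta}.
\]
We want a bound of $Y_B/(k\theta)$ per request, which requires $R_u\le Y_B$. This may fail, since $u$ could strictly contain $Q_j$. That is the main obstacle. We resolve it with a different per-request bound. Because $v$ is strictly inside $Q_j$, we have $y_v\le Y_B/2$. If $y_v/M_v\le Y_B/(k\theta)$, we are done. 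Otherwise $M_v< k\theta y_v/Y_B$, and we need another route.

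The alternative I would try first is to replace $u$ in the heaviness inequality by the nearest selected root on the chain above $v$ within $Q_j$. We may assume $Q_j$ is itself selected; if it is not, we adjust the decomposition so that each $Q_j$ is replaced by its maximal selected descendants inside $Q_j$. These are again disjoint and are charged as in the maximal case, with width $\le Y_B$. Every heavy root strictly below such a selected home then has a parent $u$ with $R_u\le Y_B$. Each of its requests therefore carries $\gamma_r\le Y_B/(k\theta)$. Heavy roots whose parent lies above $Q_j$ are exactly the maximal selected homes inside $Q_j$, which are pairwise disjoint and all lie inside $Q_j$. Their full widths are therefore at most $|Q_j|$.

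Combining the pieces, the selected-maximal pieces contribute at most $\sum_j|Q_j|\le D_B+2Y_B$. The remaining heavy requests are at most $\alpha_Bk$ in number, each with weight at most $Y_B/(k\theta)$, contributing at most $\alpha_BY_B/\theta$. This yields the claimed bound
\[
 \sum_{r\in B}\gamma_r\le D_B+\Bigl(2+\frac{\alpha_B}{\theta}\Bigr)Y_B.
\]
The delicate point is verifying that heavy roots with parent above $Q_j$ form a disjoint family inside $Q_j$. This is the Next Fit / unique-owner structure: any heavy root nested under another selected root inside $Q_j$ has that root, or a nearer one, as its parent.
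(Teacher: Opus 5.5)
Your final argument is correct. It reaches the bound through a slightly different decomposition than the paper.

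\textbf{The paper's route.} The paper never passes through the maximal residual homes $Q_j$. It takes maximality directly among the heavy roots represented in $B$. The maximal ones are disjoint represented homes, so their $\gamma$ weight is at most $D_B+2Y_B$. Each nonmaximal one $v$ lies strictly inside a represented heavy root $w$, which was selected before $v$ was processed. Hence the parent $u$ of $v$ satisfies $I_u\subseteq I_w$ and $R_u\le y_w\le Y_B$, and the heaviness inequality gives $\gamma_r\le Y_B/(k\theta)$.

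\textbf{Your route.} You refine each $Q_j$ into its maximal selected subhomes and bound $R_u$ by $y_{Q_j}\le Y_B$. Both proofs use the same two ingredients: disjointness inside a window of length $D_B+2Y_B$, and a parent of width at most $Y_B$ in the heaviness inequality. Yours has the mild advantage of reusing the same $Q_j$ as Lemma~\ref{lem:line-beta}. The paper's choice of maximality class is what makes your obstacle disappear. With $Q_j$ as the class, a heavy root strictly inside an unselected $Q_j$ may have no parent at all, so your first claim that it has a selected parent is false as stated. It may also have a parent wider than $Y_B$. A represented heavy ancestor, by contrast, automatically supplies a parent of width at most $Y_B$.

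\textbf{Write-up.} Present only the refined decomposition. Also weaken ``exactly'' in your last step: maximal selected subhomes of $Q_j$ may be light or parentless. You only need one direction: every heavy root that is not such a maximal subhome has its parent inside $Q_j$.
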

\begin{proof}
We split the represented heavy roots into maximal and nonmaximal ones.
Disjoint homes bound the first group; the heavy-root inequality bounds
the charge per request in the second.
The inclusion-maximal heavy-root homes whose own-cohort requests occur
in $B$ are disjoint. Their total length is at most $D_B+2Y_B$, and their
represented $\gamma$ weight is at most this length, even when $B$ contains
only part of a cohort.

For a nonmaximal represented heavy root $v$, some represented heavy
ancestor $w$ exists. The parent selected root $u$ of $v$ lies in $I_w$,
and hence $R_u\le y_w\le Y_B$. Its heavy inequality gives
\[
 \gamma_r=\frac{y_v}{M_v}
       \le\frac{R_u}{k\theta}
       \le\frac{Y_B}{k\theta}.
\]
There are at most $\alpha_Bk$ such requests, so the nonmaximal roots
contribute at most $\alpha_BY_B/\theta$. Together with the maximal
roots' contribution $D_B+2Y_B$, this gives the claimed bound.
\end{proof}

\begin{lemma}\label{lem:line-light}
The sum of the widths of light roots is at most $W/(1-\theta)$.
\end{lemma}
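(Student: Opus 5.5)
Fix a light root $v$ and let $u$ be its parent selected root, i.e.\ the nearest selected strict ancestor that existed when $v$ was processed. Such a $u$ exists, since roots without a selected ancestor are heavy by definition. Because $v$ was rejected, $L_u(v)>1$. Because $v$ is light, its own term $M_vR_u/(ky_v)$ is less than $\theta$. Hence the old load on $I_v$ satisfies
\[
 \ell_u(v):=\sum_{\substack{w\text{ assigned to }u\\ I_v\subseteq I_w}}\frac{M_wR_u}{ky_w}>1-\theta .
\]
This load is constant on $I_v$, because each contributing home contains $I_v$. It is also at most $f_u$ pointwise on $I_v$, since $f_u$ is the final load and assignments are never revised. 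Integrating gives
\[
 y_v<\frac{1}{1-\theta}\int_{I_v}f_u(t)\,dt .
\]

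Next, I sum these inequalities over the light roots, grouped by parent root $u$. The goal is to show that, for a fixed $u$, the homes $I_v$ of light roots whose parent is $u$ are pairwise disjoint. This is the Next Fit property. Suppose two such homes are nested, say $I_{v'}\subsetneq I_v$. Then $v$ is already a selected root when $v'$ is processed, because ancestors are processed first, with ties broken by decreasing width. The nearest selected strict ancestor of $v'$ would then be $v$ or a root below it, never $u$. This contradicts the choice of $u$ as the parent of $v'$. Homes are laminar and distinct cohorts have distinct homes, so non-nested homes are disjoint. Consequently
\[
 \sum_{v\text{ light, parent }u} y_v<\frac1{1-\theta}\int_{I_u}f_u(t)\,dt=\frac1{1-\theta}\cdot\frac{R_u}{k}\,\bigl|\{r:r\text{ assigned to }u\}\bigr| ,
\]
where the equality is the integral identity from the proof of Lemma~\ref{lem:line-load-capacity}. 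The right-hand side equals $\frac1{1-\theta}\sum_{r\text{ assigned to }u}\beta_r$.

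Finally, I sum over all roots $u$. Every residual request is assigned to exactly one owner, so the $\beta$ sums partition $W$, and the total width of light roots is at most $W/(1-\theta)$. The only delicate step is the disjointness argument, specifically checking that "parent selected root" means the nearest selected ancestor at processing time. With equal left endpoints, the wider home is processed first, so the nesting argument still applies in that case.
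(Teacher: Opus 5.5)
Your proposal is correct and follows the paper's proof: you charge each light root to the parent's old load, which exceeds $1-\theta$ on $I_v$, and you use disjointness of the immediate child-root homes of a fixed parent. Summing $\int_{I_u} f_u = \frac{R_u}{k}\,|\{r : r \text{ assigned to } u\}|$ over parents then recovers $W$. You also spell out the nesting argument behind that disjointness, which the paper only asserts.
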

\begin{proof}
We charge each light root to the load of its parent over its home.
These charges use disjoint intervals for a fixed parent, and different
parents use the weights of their own assigned requests.
Let $v$ denote a light immediate child root of $u$. Its own prospective term
is below $\theta$, but its admission failed. The already assigned
ancestor terms therefore sum to more than $1-\theta$. Every such term
contributes throughout $I_v$ and remains in the final density $f_u$.
Thus
\[
 |I_v|\le\frac1{1-\theta}\int_{I_v}f_u(t)\,dt.
\]
Immediate child-root intervals of $u$ are disjoint. Summing over light children and parents yields
\[
 \sum_{v\text{ light}}y_v
 \le\frac1{1-\theta}\sum_u\int_{I_u}f_u(t)\,dt
 =\frac1{1-\theta}\sum_r\beta_r.
\]
More explicitly, a request $r$ assigned to $u$ pays a light child
$v$ of $u$ the amount
\[
 \frac{\beta_r}{1-\theta}\,
       \frac{|H_r\cap I_v|}{|H_r|}.
\]
Its total contribution is at most $\beta_r/(1-\theta)$ by disjointness.
A grandchild of $u$ uses the requests assigned to its own parent.

\end{proof}

\subsection{Why Next Fit rather than First Fit?}
\label{subsec:line-first-fit}

The baseline in Section~\ref{subsec:class-baseline} loses a
logarithmic factor by processing distance classes independently.
Our algorithm combines classes, but the choice of owner also matters
for its analysis. Consider a \emph{First Fit}
variant that keeps the same preprocessing and load test, but tries all
selected containing roots in decreasing order of width. It joins the
first root that passes the test and opens a new root only if all fail.
This rule still preserves capacity. The example below shows that its
$\beta$ weights need not satisfy a constant-factor bound within an
arbitrary feasible offline batch. It is not a competitive lower bound
for First Fit.

\paragraph{A three-level example.}
Let $k=16$. At time $8-\varepsilon$, release eight requests at each
radius $8$, $4$, and $2$, where $0<\varepsilon<1/4$.
At virtual time $8$, they create roots with homes $[8,16)$, $[8,12)$,
and $[8,10)$. Each starts with load $1/2$. Every smaller seed cohort
is rejected by all larger roots, since its prospective load is at
least one before adding the existing load.

Inside $Q=[8,9)$, create one request for every home in the following
three levels. A request with home $[h,h+y)$ has radius $y$ and arrives
at $h-\varepsilon$.
\[
\begin{array}{c|c|c|c|c}
 \text{level}&\text{home width}&\text{number of homes}&
 \text{owner width}&\text{total }\beta\\ \hline
 0&1&1&8&1/2\\
 1&1/2&2&4&1/2\\
 2&1/4&4&2&1/2
\end{array}
\]
The first request adds load $8/(16\cdot1)=1/2$ to the largest root,
filling its load throughout $Q$. Each half-width home is therefore
rejected there, but adds load $4/(16\cdot(1/2))=1/2$ to the next root.
The quarter-width homes similarly join the smallest root. Ancestor
homes are processed first, so each rejection is already determined
when it occurs. Thus requests inside the same maximal home $Q$ use
three different owners strictly above $Q$, violating the unique-owner
property in Lemma~\ref{lem:line-beta}.

\paragraph{An unbounded loss in the certificate.}
For $m\ge2$, take $k=2^{m+1}$, $b=2^m$, and $\varepsilon=2^{-m}$.
Release $k/2$ seed requests at each radius $2^m,2^{m-1},\ldots,2$
at time $b-\varepsilon$. They create $m$ roots starting at $b$, each
with initial load $1/2$. Let $B$ consist of one additional request for
each dyadic home in $Q=[b,b+1)$ at levels $\ell=0,\ldots,m-1$:
\[
 H_{\ell,j}=[b+j2^{-\ell},b+(j+1)2^{-\ell}),
 \qquad j=0,\ldots,2^\ell-1.
\]
Its radius is $2^{-\ell}$ and its arrival is the left endpoint minus
$\varepsilon$. No cohort reaches capacity. As in the three-level
example, earlier levels fill each larger root on $Q$, while disjoint
homes at level $\ell$ fill the remaining load $1/2$ of the root of
width $2^{m-\ell}$. Each level contributes
$2^\ell\cdot2^{m-\ell}/k=1/2$ to the certificate. Thus
\[
 |B|=2^m-1<k,\qquad Y_B=1,\qquad
 D_B=1-2^{-(m-1)},\qquad \sum_{r\in B}\beta_r=m/2.
\]
Serving $B$ at its last arrival costs $Y_B+D_B<2$, so its $\beta$
weight exceeds its batch cost by $\Theta(\log k)$. This is a loss in
the batchwise certificate, not a competitive lower bound for First Fit:
the full input also contains the seed requests, whose offline cost
might support a different global accounting. Next Fit's unique-owner
property prevents this accumulation of area charges across scales.

\subsection{Combining preprocessing and residual costs}

Figure~\ref{fig:light-root-charge} illustrates the disjoint areas used
in Lemma~\ref{lem:line-light}. The accounting follows ownership: nested
selected roots do not repeatedly use one ancestor's requests.
\begin{figure}[tbp]
\centering
\begin{tikzpicture}[x=0.95cm,y=0.85cm,font=\small]
\node[anchor=east] at (1.65,5.3) {owner $u$};
\draw[figblue,very thick] (2,5.3)--(13,5.3);
\draw[figblue] (2,5.15)--(2,5.45) (13,5.15)--(13,5.45);
\node[above] at (7.5,5.3) {$I_u$};
\filldraw[fill=figgreen!12,draw=figgreen] (2,4.1) rectangle (4.75,4.7);
\filldraw[fill=figgreen!12,draw=figgreen] (7.5,4.1) rectangle (10.25,4.7);
\node at (3.375,4.4) {$I_{v_1}$};
\node at (8.875,4.4) {$I_{v_2}$};
\draw[paper arrow] (3.375,5.25)--(3.375,4.75);
\draw[paper arrow] (8.875,5.25)--(8.875,4.75);
\node[anchor=west,align=left,font=\footnotesize] at (10.55,4.4)
  {light children\\of $u$};
\filldraw[fill=figorange!10,draw=figorange,dashed] (2,3.05) rectangle (3.375,3.55);
\node at (2.6875,3.3) {$I_g$};
\draw[paper arrow,figorange] (2.6875,4.05)--(2.6875,3.6);
\node[anchor=west,font=\footnotesize] at (4,3.3)
  {If $g$ is light, its charge uses $f_{v_1}$, not $f_u$.};
\foreach \x in {2,4.75,7.5,10.25}{
  \draw[densely dotted] (\x,4.05)--(\x,3.65) (\x,2.95)--(\x,0.6);}
\fill[figgreen!18] (2,0.6) rectangle (4.75,2.25);
\fill[figgreen!18] (7.5,0.6) rectangle (10.25,2.05);
\draw[figblue,thick] (2,2.25)--(4.75,2.25)--(4.75,1.05)--(7.5,1.05)
  --(7.5,2.05)--(10.25,2.05)--(10.25,1.15)--(13,1.15);
\draw[dashed] (2,1.75)--(13,1.75);
\node[anchor=east,font=\footnotesize] at (1.8,1.75) {$1-\theta$};
\node[anchor=east] at (1.8,2.4) {$f_u(t)$};
\draw[paper arrow] (2,0.6)--(13.4,0.6) node[right] {$t$};
\node at (3.375,1.15) {$\int_{I_{v_1}} f_u$};
\node at (8.875,1.15) {$\int_{I_{v_2}} f_u$};
\node at (7.5,-0.15)
  {$\displaystyle (1-\theta)\sum_{v\text{ light child of }u}|I_v|
       \le\sum_v\int_{I_v}f_u(t)\,dt\le\int_{I_u}f_u(t)\,dt.$};
\end{tikzpicture}
\caption{Why the light-root charge does not repeat down the hierarchy
(schematic density). A light child's own term is below $\theta$, so
failed admission certifies $f_u>1-\theta$ throughout its interval.
Immediate child-root intervals are disjoint; the shaded areas therefore
use disjoint portions of the unique owner's density. A grandchild has
a different parent and uses that parent's assigned requests. The
parameter $\theta$ is used only in the analysis.}
\label{fig:light-root-charge}
\end{figure}

We combine three charges: preprocessing, heavy roots, and light roots.
By Lemmas~\ref{lem:line-load-capacity}--\ref{lem:line-light}, the last
two cost at most $3\sum_r\gamma_r$ and $3W/(1-\theta)$, respectively.
The factor three comes from the root-batch cost bound $3R_u$.
Inside an offline batch $B$, preprocessing contributes at most
$4(1-\alpha_B)Y_B$ by~\eqref{eq:line-prepaid}, while the two
certificates bound the residual charges. Keeping $\alpha_B$ in this
calculation accounts for the capacity shared by preprocessed and
residual requests. Substituting the certificate bounds, the total cost
is at most the sum over offline batches of
\[
 3\left(1+\frac1{1-\theta}\right)D_B+\left[
  3\left(2+\frac2{1-\theta}\right)+4
  +\alpha_B\left(\frac3\theta+\frac3{1-\theta}-4\right)
 \right]Y_B.
\]
The coefficient of $\alpha_B$ is positive. Maximizing over
$\alpha_B\in[0,1]$ yields the batchwise comparison
\begin{equation}\label{eq:line-direct-comparison}
 \ALG\le\sum_B\bigl(A(\theta)D_B+C(\theta)Y_B\bigr),
 \quad A(\theta)=3\left(1+\frac1{1-\theta}\right),
 \quad C(\theta)=3\left(2+\frac1\theta+\frac3{1-\theta}\right).
\end{equation}
The capacity constraint in this calculation belongs to the offline
batch $B$. Its preprocessed and residual requests occupy disjoint
portions of that capacity, although the online algorithm may serve
them in different batches.

The choice $\theta=1/(1+\sqrt3)$ minimizes $C(\theta)$ and gives
\[
 A(\theta)=6+\sqrt3,
 \qquad C(\theta)=C_*=18+6\sqrt3.
\]
For already scaled-dyadic radii, $Y_B=X_B$, and hence
\eqref{eq:line-direct-comparison} proves the $C_*$ bound. For arbitrary
radii, $Y_B<2X_B$, proving the $2C_*$ bound of
Theorem~\ref{thm:line-path-load}. The threshold $\theta$ changes only the
proof, not any online decision. All comparisons hold for each realized
input and each feasible offline partition, which also establishes the
claimed adaptive-adversary scope.

\subsection{Randomizing the scale}

\begin{corollary}\label{cor:line-random-shift}
Against an oblivious adversary, LAD with
capacity $k$ admits a randomized online
algorithm with expected competitive ratio
\[
 \frac{18+6\sqrt3}{\ln2}<40.962.
\]
\end{corollary}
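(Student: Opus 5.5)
We run Algorithm~\ref{alg:path-load} with the random scale $\lambda=2^U$, where $U$ is uniform on $[0,1)$ and drawn once before the input begins. Against an oblivious adversary the input $I$ is fixed in advance, so we may fix any optimal offline partition $\{B\}$ of $I$ independently of $U$.

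\textbf{Step 1: apply the deterministic bound for each scale.} For every realization of $\lambda$, the batchwise comparison~\eqref{eq:line-direct-comparison} holds with $\theta=1/(1+\sqrt3)$:
\[
 \ALG\le\sum_B\bigl(A D_B+C_*Y_B\bigr),\qquad A=6+\sqrt3 .
\]
That derivation never used $\lambda=1$. Preprocessing, nestedness of homes, and Lemmas~\ref{lem:line-load-capacity}--\ref{lem:line-light} hold verbatim on any grid $\{\lambda2^j\}$. Here $Y_B=\max_{r\in B}y_r$ is the rounding of $X_B$ up to the $\lambda$-grid, because rounding is monotone.

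\textbf{Step 2: compute the expected rounding.} Write $X_B=2^{s}$ and let $\{z\}$ denote the fractional part. Then $Y_B=X_B\,2^{\{U-s\}}$, since the smallest grid point $2^{U+j}\ge 2^s$ has exponent $s+\{U-s\}$. As $U$ is uniform, $\{U-s\}$ is uniform on $[0,1)$, so
\[
 \E[Y_B]=X_B\int_0^1 2^z\,dz=\frac{X_B}{\ln2}.
\]

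\textbf{Step 3: combine.} By linearity of expectation,
\[
 \E[\ALG]\le\sum_B\Bigl(A D_B+\frac{C_*}{\ln 2}X_B\Bigr).
\]
Since $A=6+\sqrt3\le C_*/\ln2$, this is at most
\[
 \frac{C_*}{\ln2}\sum_B(X_B+D_B)=\frac{C_*}{\ln2}\,\OPT(I).
\]
Finally, $C_*/\ln2=(18+6\sqrt3)/\ln2<40.962$ numerically.

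\textbf{Main obstacle.} The only delicate point is confirming that the deterministic analysis is scale-invariant and that the partition may be fixed before $U$ is drawn. Both follow from obliviousness: $I$, and hence $\OPT$'s batches, do not depend on $U$. The step needing care is that every use of $Y_B$ in Lemmas~\ref{lem:line-beta} and~\ref{lem:line-gamma} refers to the $\lambda$-rounded radius. The requirement that homes lie in an interval of length $D_B+2Y_B$ uses only $b_r-a_r\le y_r$, which holds on any grid.
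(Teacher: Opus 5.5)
Your proposal is correct and follows essentially the same route as the paper: fix an optimal partition independently of $U$, apply the batchwise comparison~\eqref{eq:line-direct-comparison} for every realized scale with $Y_B=y_U(X_B)$ by monotone rounding, use $\E[Y_B]=X_B/\ln2$, and conclude by linearity of expectation. You also state explicitly the step $6+\sqrt3\le C_*/\ln2$, which the paper uses without comment.
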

\begin{proof}
The algorithm draws $U$ uniformly from $[0,1)$ once, sets
$\lambda=2^U$, and uses this scaled dyadic grid for both radii and time.
For any fixed $x>0$, its rounded radius is
\[
 y_U(x)=2^{U+\lceil\log_2x-U\rceil},
 \qquad \E[y_U(x)]=x\int_0^1 2^z\,dz=\frac{x}{\ln2}.
\]
We fix an input independent of $U$ and an optimal partition of the
original requests. The common monotone rounding map gives
$Y_B=y_U(X_B)$. Applying~\eqref{eq:line-direct-comparison} to this
same partition for every $U$ and taking expectations gives
\[
 \E[\ALG]\le\sum_B\left((6+\sqrt3)D_B+
                    \frac{C_*}{\ln2}X_B\right)
           \le\frac{C_*}{\ln2}\OPT,
\]
which proves the corollary by linearity of expectation.
\end{proof}

The expectation uses an input fixed independently of the scale.
For adaptive inputs, the deterministic $2C_*$ guarantee still holds
for every realized scale. The randomized implementation uses an
ideal-real computation model: it samples the common scale exactly and
performs arithmetic, comparisons, and rounding on real values exactly.
It uses polynomially many such operations. We do not claim a finite-bit
implementation of this exact continuous random scale.

\subsection{Arbitrary request sizes}
\label{subsec:arbitrary-sizes}

We now allow each request to have an indivisible size $s_r$ revealed at
arrival. A batch of
capacity $K>0$ must satisfy $\sum_{r\in Q}s_r\le K$,
where $s_r\in[0,K]$. Requests cannot be split, and size affects feasibility
only: a batch still pays its service cost plus its maximum waiting time.
Zero-size requests may therefore have positive cost. Unit sizes with
$K=k$ recover the main model.
With $x_r=1$, $K=1$, and positive sizes, this is bin packing with per-bin
maximum delay~\cite{lu2026binmaximum}; that special case admits smaller
competitive constants than the guarantees below for arbitrary radii.

\begin{corollary}\label{cor:size-path-load}
The deterministic and randomized guarantees of
Theorem~\ref{thm:line-path-load} and Corollary~\ref{cor:line-random-shift}
hold for arbitrary indivisible sizes, including zero sizes. The
deterministic algorithm is polynomial-time for rational input. The
randomized algorithm uses polynomially many operations in the ideal-real
model specified in Section~\ref{sec:line-path-load}. Neither needs a minimum-size assumption.
In particular, the deterministic competitive ratio is $36+12\sqrt3$.
\end{corollary}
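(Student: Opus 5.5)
I would keep the grid, the homes $H_r$, the tree of nested intervals, Next Fit ownership and the service times exactly as in Algorithm~\ref{alg:path-load}, and change only how capacity is measured. Cohort cardinality is replaced by cohort volume.

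\textbf{Preprocessing.} At a virtual arrival, the requests of a home are packed greedily in a fixed order into batches of size at most $K$. Every batch except the last is closed only when the next item does not fit. Consecutive closed batches therefore have combined size greater than $K$, so each closed batch can be charged to at least $K/2$ volume. Its cost is at most $2y$, i.e.\ at most $4y/K$ per unit of volume; this is where the loose factor four in \eqref{eq:line-prepaid} is used. The last partial batch becomes the residual cohort $v$, with volume $S_v=\sum_{r\in v}s_r\le K$. Zero-size requests sit in the residual cohort at no volume cost.

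\textbf{Admission and feasibility.} The load term becomes $S_vR_u/(Ky_v)$. Since $S_v\le K$, the root's initial term is at most one. The integration argument of Lemma~\ref{lem:line-load-capacity} then gives total assigned volume $\int f_u\cdot K/R_u\le K$, so each root batch is feasible without splitting any request. The cost bound $3R_u$ is unchanged, because it depends only on homes.

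\textbf{Certificates.} The weights are redefined per request.
\begin{itemize}
\item $\beta_r=s_rR_u/K$.
\item $\gamma_r=y_v s_r/S_v$ for requests creating a heavy root. If $S_v=0$ the load term is $0<\theta$, so the root is light and no $\gamma$ weight arises.
\item $\alpha_B=$ (residual volume in $B$)$/K$, so preprocessed volume in $B$ is at most $(1-\alpha_B)K$.
\end{itemize}
Lemmas~\ref{lem:line-beta}, \ref{lem:line-gamma} and \ref{lem:line-light} go through verbatim with volume in place of count. The first group in each lemma is now bounded by the capacity of $B$ measured in size. The unique-owner property and the disjoint-home integrals use only the interval structure. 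The light-root lemma needs nothing new.

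\textbf{Conclusion.} Substituting into the batchwise combination gives \eqref{eq:line-direct-comparison} again, hence ratio $C_*$ on the grid and $2C_*=36+12\sqrt3$ in general. The random shift argument of Corollary~\ref{cor:line-random-shift} applies unchanged. Running time is polynomial because each cohort tests one chain of at most polynomially many stored terms with rational arithmetic.

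\textbf{Main obstacle.} The delicate step is preprocessing under First Fit-style greedy packing. A single leftover closed batch could have small volume, and zero-size items could make the $4y/K$ per-volume charge fail. I would avoid this by making the residual cohort absorb the final partial batch. Each full batch is closed only when the next item overflows; that overflowing item opens the next batch, which is full or residual. Pairing each closed batch with its successor gives volume greater than $K$ per pair. I would double-check the pairing when the successor is the residual cohort, whose volume is charged elsewhere. If it fails, I would charge that one closed batch to the residual root's $3R$ cost, since the root already has width at least $y$ and this adds a constant. That would worsen the constant, so first I would try to prove that pairing only consecutive closed batches suffices.
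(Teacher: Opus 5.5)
\textbf{Preprocessing.} The admission rule, the capacity integral, and the weighted $\beta$, $\gamma$, $\alpha_B$ certificates match the paper. The preprocessing step, however, has a genuine gap, exactly where you suspected. Plain Next Fit does not give every served bin volume at least $K/2$. Pairing handles consecutive closed bins, but the last closed bin can be unpaired. For example, suppose a cohort's fixed order is $\eps K$, then $K$. Next Fit closes $\{\eps K\}$ at cost up to $2y$ with volume only $\eps K$, and its successor is the residual cohort $\{K\}$. So \eqref{eq:line-prepaid} fails by an unbounded factor. Pairing with the residual does not rescue it. The residual volume is precisely the $\alpha_B K$ already spent by the $\beta$ and $\gamma$ certificates, and the combination step needs preprocessed and residual volume to occupy disjoint parts of $B$'s capacity; this is where the $-4$ in the coefficient of $\alpha_B$ comes from. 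Charging that bin to the residual volume, or to the residual root's $3y$ cost, therefore increases the constant, so you do not reach $36+12\sqrt3$. The missing idea is a different preprocessing rule. First serve every request with $s_r>K/2$ as a singleton, then run Next Fit only on the remaining requests. Now each closed bin individually has load greater than $K/2$, because the item that fails to fit has size at most $K/2$, and singletons exceed $K/2$ trivially. Then \eqref{eq:line-prepaid} holds using preprocessed volume alone, and the batchwise combination is unchanged.

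\textbf{Zero-mass cohorts.} Your treatment of zero-mass cohorts is also wrong. A root with no selected ancestor is heavy by definition, whatever its load term. A zero-mass cohort that does have a selected ancestor is always admitted, since its term is $0$ and the existing chain load on its home is at most one. Hence every zero-mass root is top-level and heavy. Lemma~\ref{lem:line-light} does not pay for it, because there is no parent load, and your weight $y_vs_r/S_v$ is undefined there. Such roots do arise, for instance from a cohort consisting only of zero-size requests, which the statement explicitly covers. The paper places the entire weight $y_v$ on one designated request of that cohort. Since the root is top-level, it is inclusion-maximal among the heavy roots represented in any offline batch. So only the disjoint-homes part of Lemma~\ref{lem:line-gamma} is needed for it, and the nonmaximal per-request bound never applies to it.
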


Only preprocessing and the units of path load change. Within each
newly revealed cohort, serve requests larger than $K/2$ as singletons.
Process the others by Next Fit in a fixed order, closing and serving
the current bin when the next request does not fit. Retain the final
nonempty bin as the residual cohort, even when its load is exactly $K$.
Every served bin has load greater than $K/2$, so its cost is paid by
the size-weighted volume bound, with a constant loss.

Let $M_v$ now denote the total size of residual cohort $v$, rather than
its cardinality. Its contribution to the path load of an owner $u$ is
$M_vR_u/(Ky_v)$. Integrating this load over
the owner's interval bounds the total assigned size by $K$. Thus the
same admission rule preserves capacity without splitting a cohort or
any request. The light-root argument also uses the same path load;
the certificates $\beta$ and $\gamma$ distribute their weights by size.
Appendix~\ref{sec:arbitrary-sizes} gives the weighted proof, including
zero-size requests, and the pseudocode. The algorithm requires no
optimum bin packing.

\section{Learning-augmented capacitated LAD}
\label{sec:online-advice}
Section~\ref{sec:line-path-load} gives a constant-competitive algorithm
without advice. We now seek to follow a good advised schedule more
closely while retaining a worst-case guarantee. We use the online
revelation of advice and the consistency benchmark of Lu et
al.~\cite{lu2026learninglad} for additive-delay LAD. With capacity,
each suggestion must also
specify which requests it serves: at time $t$, an advice event reveals a
service $(Q,t)$ and the identities of its already arrived requests. The
algorithm sees no future advice events or total advice cost. The provider
may know the full input; only when its advice is revealed is restricted.

As in~\cite{lu2026learninglad}, consistency $1+\eps$ means
$\ALG(I)\le(1+\eps)\cost(\ADV)$ for \emph{every} complete,
capacity-feasible advice stream $\ADV$ on $I$. Robustness $R$ requires
$\ALG(I)\le R\OPT(I)$ even for inaccurate advice.
Invalid references can be ignored and no physical action may violate
capacity. Consistency compares with the supplied feasible schedule, even
when it is far from optimal.

The new difficulty is accounting for requests left pending when switching
between advice and a robust baseline. Under additive delay, waiting costs
are charged separately to requests. Under maximum delay, one advice batch
may share a waiting cost that the algorithm pays across several batches.
Capacity can also force a switch to clear old requests in several
services. Charging each clearance to the original advice batch would
count its cost repeatedly. We therefore need switching rules that control
the pending requests and charges that remain additive across switches.
The monitor below gives the following guarantee.

\begin{samepage}
\begin{theorem}\label{thm:feasible-stream-monitor}
For LAD with capacity $k$, let $B$ denote
a deterministic $R$-competitive online algorithm. For every
$0<\eps\le1$, there is a deterministic online algorithm satisfying
\begin{align}
 \ALG(I)&\le(1+\eps)\cost(\ADV),
       \label{eq:stream-consistency}\\
 \ALG(I)&\le
 \left(\frac{32}{\eps^2}+\frac{72}{\eps}+36\right)R\OPT(I).
       \label{eq:stream-robustness}
\end{align}
The first bound holds for every complete, capacity-feasible advice stream
$\ADV$ revealed at its service times. The second holds independently of advice
quality, including when advice is incomplete. The algorithm needs no
future advice partition or total advice cost. For rational input and
$\eps$, the overhead is polynomial in the processed input and advice
description if $B$ is polynomially simulable.
\end{theorem}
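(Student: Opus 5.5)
The algorithm is a budgeted switching monitor that alternates between two modes: \emph{follow} mode, which executes advice services as they are revealed, and \emph{robust} mode, which simulates the baseline $B$ on the whole input. It keeps two running quantities: the cost $A_t$ of the advice services revealed so far, and the cost $B_t$ that a fresh simulation of $B$ would have paid. The threshold is $\eps$.

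In follow mode, each revealed advice service $(Q,t)$ is executed on those requests of $Q$ that are still pending. Requests of $Q$ that we already served are dropped, which keeps the batch feasible and does not raise its cost. Requests we served that advice has not yet served need separate treatment: they are cleared at the switch into follow mode and paid by the switch budget. We leave follow mode when the accumulated advice cost in the phase exceeds roughly $(2/\eps)$ times the baseline's cost for the same phase, since this certifies that the advice is expensive relative to $\OPT$.

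In robust mode, we run $B$ until its phase cost doubles relative to the advice cost in the phase. We then switch back to advice, first clearing all pending requests that advice has already served. For capacity, the clearance is split into full groups of $k$ plus a remainder, each served immediately.

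Consistency~\eqref{eq:stream-consistency} is argued by charging in two parts.
\begin{itemize}
\item Every follow-mode service costs at most the corresponding advice batch. It serves a subset of $Q$ at the same time, and later arrivals cannot enter it.
\item The overhead comes from robust phases and clearances. When advice is complete and feasible, robust mode is entered only after advice cost exceeds $2/\eps$ times the baseline cost. So each robust phase, together with its clearance service cost, is at most an $\eps/2$ fraction of the advice cost in the preceding phase.
\end{itemize}
Clearance \emph{waiting} costs are the delicate part. One advice batch shares a single maximum waiting time, but a clearance may split its requests across several batches. The abstract's hint is the tool here: define the residual advice cost $\Phi_t$ as the total cost of advice batches that still contain requests pending for us. Each clearance service's waiting cost is charged to a decrease of $\Phi$. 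The oldest request in a clearance batch belongs to some advice batch whose waiting time is at least that request's age; clearing it removes that batch from $\Phi$. A batch leaves $\Phi$ only once, so waiting charges are additive across switches. Full groups of $k$ have service cost bounded via the volume bound $\OPT\ge\frac1k\sum x_r$, and hence by the baseline budget. Multiplying the overheads $(1+\eps/2)(1+\eps/2)$ stays within $1+\eps$ after tuning the constants in the thresholds.

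For robustness~\eqref{eq:stream-robustness}, every follow phase is bounded by $(2/\eps)$ times the baseline cost of that phase. The baseline cost over all phases telescopes to at most $R\OPT$ times a constant, because restarted simulations are paid by doubling. Clearances add at most three times the baseline per switch, because switches are triggered by budget doubling. This gives the product form $(4/\eps+c)^2R$ with $c$ about 6, i.e.\ $32/\eps^2+72/\eps+36$ up to the chosen constants.

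The main obstacle is the waiting component of clearance batches. It must be shown that the $\Phi$-decrease argument never charges one advice batch twice, even when capacity forces several clearance services. I would first try ordering clearance by age and charging only the oldest request of each clearance batch. Capacity-induced extra batches would be paid through service cost and the volume bound instead.
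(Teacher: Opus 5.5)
\textbf{The switching rule.} Your rule compares only costs already \emph{paid} within a phase, and it restarts $B$. Both choices break under online-revealed advice and maximum delay.

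Paid cost says nothing about pending requests. Advice that never proposes a service keeps $A_t=0$, so you never leave follow mode and nothing is ever served. That violates completion and the robustness bound, which must hold for incomplete advice. Advice that serves everything at a late time $T$ is seen only after the waiting is already unbounded relative to $\OPT$. Conversely, while $B$ has paid nothing in a phase, any advice service exceeds $(2/\eps)\cdot 0$, so the test cannot certify that advice is expensive.

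Restarting $B$ does not telescope either. A fresh run is $R$-competitive only on a suffix instance, and it ignores the waiting already accumulated by old pending requests. Suffix optima also need not sum to $O(\OPT)$: an optimal batch straddling many phases pays its service cost once in $\OPT$ but once per phase in the suffix optima.

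The paper never restarts either simulation. It scores each source by $H_X=p_X+S(P_X)+D(P_X,t)$, where $S$ and $D$ are the separate sorted-packing lower bounds on service and current waiting. This score is monotone and at most $\cost(X)$ for complete feasible $X$. It grows at rate $\lceil|P_X|/k\rceil$ between events, so cap timers fire before a late advice service is revealed. Budgets are global dyadic levels: $A$ is eligible if $H_A<b$, $B$ if $H_B<\delta b$, and $A$ is preferred. Consistency uses $b\le H_A\le C_A$ at every level where $B$ is selected, together with the fact that levels ending blocks of one source are distinct. Robustness uses $b_{\max}\le 2C_B/\delta$.

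\textbf{The waiting charge.} This also fails as stated, and ordering the clearance by age does not repair it. With $\Phi$ counting whole advice batches, clearing the oldest pending member of $G_j$ does not remove $G_j$: members arriving after the switch keep it in $\Phi$ or bring it back. So an advice batch with a long arrival span and a late $T_j$ can be charged its full cost at several successive $A\to B$ switches. Within one clearance, several groups of $k$ may also have their oldest requests in the same $G_j$.

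The missing idea is a \emph{residual} potential with partial decreases, $\Phi(Z)=\sum_j c(G_j\setminus Z,T_j)$ over all original advice batches, including past ones. The paper clears \emph{all} old physical pending $P$ at every source change. Since $P\subseteq P_A$ during an $A$ block, an $A\to B$ change at time $t$ affects only batches with $T_j\ge t$. Afterwards, every surviving member of $G_j\setminus Z$ has release at least $t$. Batch $j$ therefore drops by at least $t-\min_{r\in G_j\cap P}a_r$. The sets $G_j\cap P$ form a capacity-feasible partition of $P$, so the total drop is at least $D(P,t)$, and later switches use later decreases.

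This bounds a packing value, not the cost of your particular clearance batches, so the clearance must be packed to match it. Age-sorted groups can pay $mX$ in service when each of $m\le k$ groups contains one request of radius $X$ and $k-1$ nearby ones, while $S(P)$ is essentially $X$. The volume bound cannot cover this, since $\max_{r\in G}x_r\ge\frac1k\sum_{r\in G}x_r$ points the wrong way. Moreover, any charge to $\OPT\le C_A$ is a constant fraction of $C_A$, which $(1+\eps)$-consistency cannot absorb.

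The paper uses the biased packing of Lemma~\ref{lem:stream-two-clearances}, costing $(1+\eta)D(P,t)+(1+1/\eta)S(P)$. It pays the service part through the non-restarted baseline, $S(P_e)\le p_B+S(P_B)\le H_B^{\rm post}<\delta b_e^+$, which holds even for requests $B$ has already served. With $\eta=\eps/2$ and $\delta=\eps/(4(3+2/\eps))$ this gives $1+\eta+2\delta(3+1/\eta)=1+\eps$ and $4[(1+\eta)/\delta+3+1/\eta]=32/\eps^2+72/\eps+36$. Your product $(4/\eps+6)^2$ does not reproduce this constant.
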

\end{samepage}

With the LAD algorithm from Section~\ref{sec:line-path-load} as the
baseline, robustness is $O(\eps^{-2})$.

\paragraph{The switching idea.}
The algorithm simulates the advice and the baseline throughout the
input. It measures each schedule by the cost already paid plus lower
bounds on the service and waiting costs of its pending requests.
Geometrically increasing budgets determine which schedule to follow.
Advice is preferred whenever its budget permits; a smaller budget for
the baseline limits spending away from advice. At a switch, clearing
all old pending requests ensures that the new schedule can serve every
request still waiting for the algorithm.

The main issue is paying for these clearances while keeping consistency
close to one. When leaving advice, the waiting charge is paid by the
decrease in its remaining cost after deleting the requests just cleared.
Different switches use successive decreases, rather than repeatedly
charging an entire advice batch. The baseline budget pays the additional
service cost. When returning to advice, the algorithm also clears
requests that advice served while the baseline was active; these are
charged within the baseline block. This restores the residual-advice comparison at every return.

\subsection{Packing costs and virtual schedules}
The algorithm serves radius-zero requests immediately for free and
removes them from advice. We therefore assume positive radii below.
An input containing only radius-zero requests costs nothing.

\paragraph{Separate packing costs.}
For a pending set $P$ at time $t$, we write $w_r(t)=t-a_r$ and define
\begin{equation}\label{eq:stream-separate-packings}
 S(P)=\min_{\Pi}\sum_{G\in\Pi}\max_{r\in G}x_r,
 \qquad
 D(P,t)=\min_{\Pi}\sum_{G\in\Pi}\max_{r\in G}w_r(t),
\end{equation}
where each minimum is over partitions into groups of at most $k$
requests. The two minimizing partitions may differ: $S$ ignores waiting,
and $D$ ignores distance. Nevertheless, any feasible partition pays at
least $S(P)$ in service cost and $D(P,t)$ in current waiting cost, so
their sum is a lower bound on clearing $P$. Both values are zero on the
empty set. For nonnegative weights
$v_r$, sorting them in nonincreasing order and grouping consecutive
$k$ requests minimizes the sum of group maxima. Indeed, for every
$z\ge0$ any partition has at least
$\lceil|\{r:v_r>z\}|/k\rceil$ groups of maximum weight greater than
$z$, and the sorted partition attains all these bounds. Integration over
$z$ proves optimality. Thus $S$ and $D$ are exactly computable by sorting;
they are also monotone under taking subsets.

\begin{samepage}
\begin{lemma}\label{lem:stream-two-clearances}
There are polynomial-time procedures that clear $P$ at time $t$ with
cost at most
\begin{equation}\label{eq:stream-ordinary-clearance}
 2\bigl(S(P)+D(P,t)\bigr),
\end{equation}
and, for any $\eta>0$, with cost at most
\begin{equation}\label{eq:stream-biased-clearance}
 (1+\eta)D(P,t)+(1+1/\eta)S(P).
\end{equation}
Every output batch has at most $k$ requests.
\end{lemma}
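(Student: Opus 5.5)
The plan is to reduce both clearances to a single one-dimensional sorting problem, which the paper already solves exactly. Fix $\eta>0$ and give each pending request the combined weight
\[
 v_r=\max\{x_r,\ \eta\,w_r(t)\}.
\]
The procedure sorts $P$ by $v_r$ in nonincreasing order, cuts the list into consecutive groups of $k$, and serves every group at time $t$. Capacity holds by construction, and the work is one sort. The ordinary clearance~\eqref{eq:stream-ordinary-clearance} is the case $\eta=1$, since then $(1+\eta)D+(1+1/\eta)S=2(S+D)$. So it suffices to prove~\eqref{eq:stream-biased-clearance}.

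\emph{Per-group cost.} Consider an output group $G$ and write $V_G=\max_{r\in G}v_r$. We have $\max_G x_r\le V_G$ and $\max_G w_r(t)\le V_G/\eta$. So the group costs at most $(1+1/\eta)V_G$. This reduces the claim to showing
\[
 \sum_G V_G\le S(P)+\eta D(P,t).
\]

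\emph{Threshold comparison.} This is the step that needs care. By the counting argument stated before the lemma, the sorted partition attains
\[
 \sum_G V_G=\int_0^\infty\Bigl\lceil\tfrac{|\{r:v_r>z\}|}{k}\Bigr\rceil\,dz .
\]
Similarly, $S(P)=\int_0^\infty\lceil|\{r:x_r>z\}|/k\rceil\,dz$. For the waiting term, substitute $z=\eta z'$ to get $\eta D(P,t)=\int_0^\infty\lceil|\{r:\eta w_r(t)>z\}|/k\rceil\,dz$. For every $z$, $\{v_r>z\}=\{x_r>z\}\cup\{\eta w_r>z\}$. The size of this union is at most the sum of the two set sizes, and $\lceil(a+b)/k\rceil\le\lceil a/k\rceil+\lceil b/k\rceil$. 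Integrating pointwise over $z$ gives the displayed inequality.

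Multiplying by $(1+1/\eta)$ gives total cost at most
\[
 (1+1/\eta)S(P)+(1+\eta)D(P,t),
\]
which is~\eqref{eq:stream-biased-clearance}.

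The only subtle point is that $S$ and $D$ are attained by different partitions. The threshold (layer-cake) representation avoids comparing those partitions directly: the union bound is applied level by level, so no common partition is ever needed.
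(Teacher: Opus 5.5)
Your proof is correct. For the ordinary bound it matches the paper's argument: at $\eta=1$ your weight is $\max\{x_r,w_r(t)\}$, and the paper proves~\eqref{eq:stream-ordinary-clearance} by the same pointwise inequality between ceilings, integrated over $z$.

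For the biased bound the routes differ. The paper splits $P$ into $L=\{r:x_r\le\eta w_r(t)\}$ and its complement and packs them separately: $L$ by nonincreasing age, $P\setminus L$ by nonincreasing radius. A group in $L$ costs at most $(1+\eta)$ times its maximum age, and a group in the complement at most $(1+1/\eta)$ times its maximum radius. The total is therefore at most $(1+\eta)D(L,t)+(1+1/\eta)S(P\setminus L)$, and subset monotonicity finishes. That proof needs no threshold counting, only the exact optimality of each sorted packing.

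You instead merge both coordinates into the single weight $\max\{x_r,\eta w_r(t)\}$ and pay a uniform factor $1+1/\eta$ per group. You then recover the asymmetric coefficients from the rescaled layer-cake formula for $\eta D(P,t)$ and the identity $(1+1/\eta)\eta=1+\eta$. Your route buys uniformity: one procedure and one counting argument handle both clearances, and the ordinary clearance is literally the case $\eta=1$.

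Nothing is lost by merging. For $r\in L$ your weight is $\eta w_r(t)$ and otherwise it is $x_r$, so $\{r:v_r>z\}$ is the disjoint union of $\{r\in L:\eta w_r(t)>z\}$ and $\{r\notin L:x_r>z\}$. The same argument then gives your procedure the paper's finer bound $(1+\eta)D(L,t)+(1+1/\eta)S(P\setminus L)$.

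Your procedure is not the one the paper's algorithm invokes in biased mode. However, the later analysis uses only~\eqref{eq:stream-biased-clearance}, so your procedure could be substituted without change.
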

\end{samepage}

The ordinary procedure sorts requests by nonincreasing $\max\{x_r,w_r(t)\}$ and
groups consecutive $k$ requests. The biased procedure separates
$L=\{r:x_r\le\eta w_r(t)\}$ from its complement, sorting $L$ by
nonincreasing age and its complement by nonincreasing radius before
packing each separately.
We use \textsc{PackValue} for the optimum sorted packing value and
\textsc{Clear} for either clearance procedure. Their pseudocode and the
proof of Lemma~\ref{lem:stream-two-clearances} appear in
Appendix~\ref{sec:advice-analysis}.

\paragraph{Two virtual schedules.}
The algorithm continuously simulates two schedules on the original
input: the baseline $B$ and the advice schedule $A$. A revealed advice
proposal is executed in the simulation of $A$. Neither simulation is
restarted at a switch. We call their states \emph{virtual}, and use
\emph{physical} for the services actually executed by our algorithm.
Let $p_X(t)$ and $P_X(t)$ denote the paid cost and pending set of
$X\in\{A,B\}$, respectively. Its score is
\begin{equation}\label{eq:stream-source-score}
 H_X(t)=p_X(t)+S(P_X(t))+D(P_X(t),t).
\end{equation}
The score includes both the cost already paid by the virtual schedule
and lower bounds on the cost of its pending requests. All three terms
are available online. For malformed advice, the algorithm discards
identifiers outside $P_A$ and ignores empty or capacity-infeasible
proposals. This leaves feasible advice unchanged. Physical services leave both virtual pending sets unchanged.

\begin{lemma}\label{lem:stream-source-score}
The score $H_X$ is nondecreasing. If $X$ is complete and feasible, then
$H_X(t)\le C_X:=\cost(X)$ for all $t$.
Moreover, for any set $P$ of already arrived requests,
\begin{equation}\label{eq:stream-spatial-certificate}
 S(P)\le p_B(t)+S(P_B(t))\le H_B(t).
\end{equation}
\end{lemma}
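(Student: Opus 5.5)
The lemma has three parts: monotonicity of $H_X$, the upper bound $H_X(t)\le C_X$ for complete feasible $X$, and the spatial certificate~\eqref{eq:stream-spatial-certificate}. I will prove them in that order. Each follows from comparing pending sets before and after an event, using that $S$ and $D$ are monotone under taking subsets, and that $D(P,t)$ is nondecreasing in $t$ for fixed $P$ since every $w_r(t)$ increases.

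\emph{Monotonicity.} Between events, $p_X$, $P_X$ and $S(P_X)$ are constant while $D(P_X(t),t)$ is nondecreasing in $t$. An arrival adds a request to $P_X$, which by monotonicity under subsets can only increase $S$ and $D$. The remaining event is a virtual service $(Q,t)$ of $X$. It raises $p_X$ by $f(Q)+\max_{r\in Q}w_r(t)$ and removes $Q$ from the pending set. I need
\[
 S(P)\le S(P\setminus Q)+f(Q),\qquad D(P,t)\le D(P\setminus Q,t)+\max_{r\in Q}w_r(t).
\]
Both hold by subadditivity of the packing values: take an optimal partition of $P\setminus Q$ and add $Q$ as one extra group, which is allowed because $|Q|\le k$. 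The sorted optimal partition of each quantity need not be used here. Malformed advice is discarded before it acts, so only feasible services occur. Physical services do not touch virtual states.

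\emph{Upper bound.} Fix a complete feasible $X$ and a time $t$. Every request pending in $X$ at time $t$ is served by $X$ later, in batches of size at most $k$ and at times $t'\ge t$. These batches, restricted to $P_X(t)$, induce a partition $\Pi$ of $P_X(t)$. The future cost of $X$ is at least
\[
 \sum_{G\in\Pi}\Bigl(\max_{r\in G}x_r+\max_{r\in G}w_r(t)\Bigr)\ge S(P_X(t))+D(P_X(t),t),
\]
because each restricted group's radius and waiting maxima are at most those of its full batch, and waiting at $t'\ge t$ is at least waiting at $t$. Adding $p_X(t)$ gives $H_X(t)\le C_X$. The main point to handle carefully is simultaneity: services at time $t$ itself should be counted either in $p_X$ or in the future, consistently. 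Counting them in $p_X$ after they execute keeps the inequality valid in either convention.

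\emph{Spatial certificate.} Every arrived request in $P$ is either pending in $B$ at time $t$ or was already served by some virtual batch of $B$. Grouping $P$ by those past batches, each with at most $k$ requests and service cost at least the group's maximum radius, and by an optimal $S$-partition of $P\cap P_B(t)$, gives a feasible partition of $P$. Its value is at most $p_B(t)+S(P_B(t))$. Hence $S(P)\le p_B(t)+S(P_B(t))$. The second inequality of~\eqref{eq:stream-spatial-certificate} is immediate since $D\ge 0$.
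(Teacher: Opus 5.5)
Your proposal is correct and follows essentially the same route as the paper. It uses subadditivity of the packing values at each virtual service for monotonicity, restricts the future services of $X$ to $P_X(t)$ to get $H_X(t)\le C_X$, and combines restrictions of $B$'s past batches with a service-optimal partition of its pending part for the spatial certificate; your remarks on simultaneous events and malformed advice are only minor refinements of that argument.
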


\subsection{The switching rule}
We set
\begin{equation}\label{eq:stream-parameters}
 \eta=\frac{\eps}{2},\qquad
 \delta=\frac{\eps}{4(3+2/\eps)}.
\end{equation}
At the first positive-radius arrival block $J$, the algorithm
initializes $b_0=2S(J)$ and uses the levels $b_n=2^n b_0$.
The level never decreases. At level $b$, advice $A$ is eligible when
$H_A<b$, and the baseline $B$ is eligible when $H_B<\delta b$.
The algorithm prefers $A$ whenever it is eligible. If neither is
eligible, it doubles $b$ until one becomes eligible. Initially $A$
is eligible because all requests in $J$ have age zero. The smaller
baseline budget limits how much the algorithm spends away from advice.

The chosen schedule is called the \emph{active source}. The algorithm
follows each of its services on the requests that are still physically
pending, omitting empty restrictions. The other source continues only
in simulation. When the active score reaches its cap, the algorithm
selects again. A higher level with the same active source requires no
clearance. An actual source change first clears all old physical pending
requests, using~\eqref{eq:stream-biased-clearance} for $A\to B$ and
\eqref{eq:stream-ordinary-clearance} for $B\to A$
(lines~\ref{line:biased-switch} and~\ref{line:ordinary-switch} of
Algorithm~\ref{alg:monitor-select}). The latter clearance includes
requests that virtual advice served while $B$ was active.

\begin{algorithm}[!ht]
\caption{Advice monitor: state, scores, and budget selection}\label{alg:monitor-select}
\begin{algorithmic}[1]
\Require $0<\eps\le1$; continuously simulated advice $A$ and baseline $B$
\State $\eta\gets\eps/2$; $\delta\gets\eps/[4(3+2/\eps)]$
\State $S(P)\gets\Call{PackValue}{P,x,k}$;
       $D(P,t)\gets\Call{PackValue}{P,t-a,k}$
\State For $Y\in\{A,B\}$ maintain its own pending set $P_Y$ and paid cost $p_Y$.
\State $H_Y(t)\gets p_Y+S(P_Y)+D(P_Y,t)$; maintain physical pending $P$ separately.
\State Serve radius-zero arrivals immediately and remove their identifiers from advice.
\State At the first positive-radius arrival block $J$, admit $J$ to all three queues;
       set $b\gets2S(J)$ and active source $X\gets A$ before service events.
\Function{Select}{$h_A,h_B,t$}
 \State $X_{\rm old}\gets X$
 \While{$h_A\ge b$ and $h_B\ge\delta b$}
  \State $b\gets2b$
 \EndWhile
 \If{$h_A<b$}
  \State $X\gets A$
 \Else
  \State $X\gets B$
 \EndIf
 \If{$X\ne X_{\rm old}$}
  \If{$X_{\rm old}=A$}
   \State \Call{Clear}{$P,t,\textit{biased},\eta$}\label{line:biased-switch}
  \Else
   \State \Call{Clear}{$P,t,\textit{ordinary},1$}\label{line:ordinary-switch}
  \EndIf
 \EndIf
 \Comment{No cleanup on a same-source level change}
\EndFunction
\State Define $\operatorname{cap}(A)=b$ and $\operatorname{cap}(B)=\delta b$.
\end{algorithmic}
\end{algorithm}

Algorithms~\ref{alg:monitor-select} and~\ref{alg:monitor-events}
specify budget selection and event processing. At a common timestamp,
arrivals precede virtual services in a fixed order, followed by any
remaining timer. A discrete event that exhausts the active cap triggers
selection using tentative post-event scores, before the event is
committed. This prevents a jump from being charged to an exhausted
budget. On a source change, clearing the old physical queue restores
the pending-set inclusion proved below: new arrivals enter after
clearance, and a triggering virtual service is committed without being
executed physically. Between events, the active score grows at rate
$\lceil|P_X|/k\rceil$, which determines its next cap timer.
Eligibility is strict, and hence equality exhausts a cap. Throughout a block
the active score is at most its cap; the inequality may be an equality
just before a simultaneous event. A \emph{block} is a maximal interval
with the same active source, and may span several levels.

\begin{algorithm}[!ht]
\caption{Advice monitor: arrivals, virtual services, and timers}\label{alg:monitor-events}
\begin{algorithmic}[1]
\Require State from Algorithm~\ref{alg:monitor-select}; arrival events precede services and timers at ties
\Statex \textbf{Arrival block $J$ at time $t$, after initialization:}
\State $\widehat H_Y\gets p_Y+S(P_Y\cup J)+D(P_Y\cup J,t)$ for $Y\in\{A,B\}$
\If{$\widehat H_X\ge\operatorname{cap}(X)$}
 \State \Call{Select}{$\widehat H_A,\widehat H_B,t$}
       \Comment{Cleanup sees $P$, not $P\cup J$}
\EndIf
\State Add $J$ to $P,P_A,P_B$; then process the time-$t$ virtual services.\label{line:admit-arrivals}
\Statex \textbf{Virtual service $(Y,G,t)$, in fixed order at ties:}
\State If $Y=A$, keep only distinct identifiers of $G$ in $P_A$;
       discard empty or infeasible proposals.
\State $\widehat H_Y\gets p_Y+c(G,t)+S(P_Y\setminus G)+D(P_Y\setminus G,t)$
\State $\widehat H_Z\gets H_Z(t)$ for the other source $Z$; $X_{\rm old}\gets X$
\If{$Y=X$ and $\widehat H_Y\ge\operatorname{cap}(X)$}
 \State \Call{Select}{$\widehat H_A,\widehat H_B,t$}
       \Comment{Before committing this virtual service}
\EndIf
\If{$Y=X_{\rm old}$ and $X=X_{\rm old}$}
 \State Serve $G\cap P$ physically if nonempty; remove it from $P$.
\EndIf
\State $p_Y\gets p_Y+c(G,t)$; $P_Y\gets P_Y\setminus G$\label{line:commit-service}
       \Comment{Always commit the virtual event}
\Statex \textbf{Waiting timer, after arrivals and virtual services at time $t$:}
\If{$H_X(t)\ge\operatorname{cap}(X)$}
 \State \Call{Select}{$H_A(t),H_B(t),t$}
\EndIf
\Statex \textbf{After each event or selection:}
\State If $P_X\ne\varnothing$, set the next cap timer to
 $t+[\operatorname{cap}(X)-H_X(t)]/\lceil|P_X|/k\rceil$;
 otherwise cancel it.
\State Continue both virtual schedules without restarting either one.
\end{algorithmic}
\end{algorithm}

Figure~\ref{fig:advice-switching} summarizes the two kinds of cleanup
and the order of operations at a triggering event.
\begin{figure}[tbp]
\centering
\begin{tikzpicture}[x=1cm,y=0.9cm,font=\small]
\node[paper box,draw=figblue,fill=figblue!8,text width=4.4cm,minimum height=1.35cm]
  (A) at (3,3) {Follow advice $A$\\$P\subseteq P_A$\\eligible when $H_A<b$};
\node[paper box,draw=figorange,fill=figorange!8,text width=4.4cm,minimum height=1.35cm]
  (B) at (10.5,3) {Follow baseline $B$\\$P\subseteq P_B$\\eligible when $H_B<\delta b$};
\draw[paper arrow,figblue] (A.north)--(3,4.7)--(10.5,4.7)--(B.north);
\node[above,align=center] at (6.75,4.75)
  {Clear old pending with biased packing\\
   $(1+\eta)D+(1+1/\eta)S$};
\draw[paper arrow,figorange] (B.south)--(10.5,1.3)--(3,1.3)--(A.south);
\node[below,align=center] at (6.75,1.2)
  {Clear all old pending for at most $2(S+D)$\\
   Include requests already served by virtual advice.};
\node[align=center,font=\footnotesize] at (6.75,-0.5)
  {Both virtual schedules keep running. Prefer $A$ when eligible.\\
   If neither is eligible, double $b$; keeping the same source causes no cleanup.};
\draw[black!25] (0.5,-1.25)--(13,-1.25);
\node[anchor=west,font=\small\bfseries] at (0.5,-1.75) {At a triggering arrival or active service:};
\node[paper box,text width=2.45cm,minimum height=0.95cm,font=\footnotesize] (e1) at (1.65,-2.85)
  {Tentative\\post-event scores};
\node[paper box,text width=2.2cm,minimum height=0.95cm,font=\footnotesize] (e2) at (4.85,-2.85)
  {Select level\\and source};
\node[paper box,text width=2.8cm,minimum height=0.95cm,font=\footnotesize] (e3) at (8.25,-2.85)
  {Clear old pending\\if switching};
\node[paper box,text width=2.15cm,minimum height=0.95cm,font=\footnotesize] (e4) at (11.65,-2.85)
  {Commit\\revealed event};
\draw[paper arrow] (e1)--(e2);
\draw[paper arrow] (e2)--(e3);
\draw[paper arrow] (e3)--(e4);
\end{tikzpicture}
\caption{The advice combiner separates virtual histories from physical
service. Here $P$ is the physical pending set, and the displayed packing
costs are evaluated on the old set cleared at a switch. Returning to
$A$ removes stale identifiers before following advice again. At a jump,
selection uses tentative post-event scores, while cleanup uses
pre-event pending. A triggering service is committed virtually but is
not replayed physically after a source change.}
\label{fig:advice-switching}
\end{figure}

\begin{lemma}\label{lem:stream-pending-invariant}
During an $X$ block, the physical pending set is a subset of $P_X$.
Each source change clears all old physical pending before the new block
starts. If an arrival causes the change, the new arrivals belong to both
virtual pending sets when they enter the new block.
\end{lemma}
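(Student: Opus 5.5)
We prove Lemma~\ref{lem:stream-pending-invariant} by induction over the sequence of events processed by Algorithms~\ref{alg:monitor-select} and~\ref{alg:monitor-events}, in the fixed tie order. The invariant is that during an $X$ block, $P\subseteq P_X$. It holds at initialization. When the first block $J$ arrives, the algorithm adds $J$ to $P$, $P_A$ and $P_B$ together and sets $X=A$, so $P=P_A$. After that, we check that each event type preserves the invariant, and we treat events that trigger \textsc{Select} separately.

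\textbf{Non-switching events.}
\begin{itemize}
\item Radius-zero arrivals are served at once and removed from every queue. They never enter $P$, so they cannot break the inclusion.
\item A positive-radius arrival block $J$ adds $J$ to all three sets at line~\ref{line:admit-arrivals}. If $P\subseteq P_X$ held before, then $P\cup J\subseteq P_X\cup J$.
\item A virtual service $(Y,G,t)$ with $Y\ne X$ changes only $P_Y$. It leaves both $P$ and $P_X$ unchanged.
\item A virtual service with $Y=X$ and no source change is applied to both sets: the algorithm removes $G\cap P$ from $P$ and $G$ from $P_X$. Hence $P\setminus G\subseteq P_X\setminus G$.
\item Discarded identifiers, those outside $P_A$, never belonged to $P_A$. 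Since $P\subseteq P_A$ during an $A$ block, they are not in $P$ either, so the restriction to $P_A$ does not affect the inclusion.
\item Timers change neither set.
\item A level change with the same source performs no clearance. It also leaves the inclusion intact, because the inclusion does not depend on $b$.
\end{itemize}

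\textbf{Source changes.} Suppose \textsc{Select} changes the source. Line~\ref{line:biased-switch} or~\ref{line:ordinary-switch} then clears the whole current physical set $P$, so $P=\varnothing\subseteq P_{X_{\rm new}}$ holds trivially. This gives the second claim, that every source change clears all old physical pending requests. What remains is the order of operations at triggering events.

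\textbf{Triggering events.} This is the main point to verify.
\begin{itemize}
\item Arrival trigger. \textsc{Select} runs before line~\ref{line:admit-arrivals}, and the comment notes that cleanup sees $P$, not $P\cup J$. After clearing, $J$ is added to $P$, $P_A$ and $P_B$ simultaneously. So $P=J\subseteq P_{X_{\rm new}}$, and the new arrivals lie in both virtual pending sets when the new block begins. This is the third claim.
\item Virtual-service trigger. Here $Y=X_{\rm old}\ne X$. The physical service is skipped, because the guard requires $X=X_{\rm old}$. Only the virtual commit at line~\ref{line:commit-service} shrinks $P_Y$ (the old source), while $P_{X_{\rm new}}$ is unchanged. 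Since $P=\varnothing$ after clearing, the inclusion for the new source holds.
\item Timer trigger. The clearance empties $P$ and no other set changes.
\end{itemize}

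\textbf{The main obstacle.} The delicate case is returning to $A$ after a $B$ block. Advice may have served requests virtually that are still physically pending. The ordinary clearance removes all of $P$, including such stale requests, so no physical request outside $P_A$ survives. We would state explicitly that this is why the clearance must happen before the new block starts. We would also note that simultaneous events at the same timestamp are handled one at a time, so the induction step applies to each in turn.
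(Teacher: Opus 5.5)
Your proof is correct and takes essentially the paper's route: an event-by-event check that arrivals enter all three queues together, active services shrink $P$ and $P_X$ consistently, and inactive services leave both untouched. It also shows that each source change empties $P$ before the new arrivals are admitted and before a triggering virtual service is committed. You are more granular than the paper (radius-zero requests, same-source level changes, timer triggers), and the only quibble is wording: discarded identifiers are those not \emph{currently} in $P_A$, rather than ones that never belonged to it.
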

\begin{proof}
The statement holds initially. An arrival adds the same identifiers to
both virtual queues and the physical queue. An active virtual service
removes all of its physically pending identifiers. Inactive services do
not affect the active queue. At a source change, clearing old physical
pending restores the inclusion for either new source; the event rules
ensure that a triggering service is not missed by a nonempty new
queue: arrivals enter after clearance, and virtual services are committed
after selection (lines~\ref{line:admit-arrivals} and
\ref{line:commit-service} of Algorithm~\ref{alg:monitor-events}).
This proves the invariant.
\end{proof}
\FloatBarrier

\subsection{Residual advice and the waiting charge}
We now fix any complete feasible advice schedule, with original
batches $(G_j,T_j)$. For consistency, we need to pay for followed
advice services and for the waiting part of each $A\to B$ clearance.
Both payments come from deleting physically served requests from the
original advice batches. For the set $Z$ of physically served request
identifiers, the remaining advice cost is
\begin{equation}\label{eq:stream-residual-potential}
 \Phi(Z)=\sum_j c(G_j\setminus Z,T_j),\qquad c(\varnothing,T_j)=0.
\end{equation}
The sum includes original services whose times are already in the past.
It is a potential used only in the proof, not a future schedule available
to the algorithm. Deleting identifiers cannot increase any summand.
Initially $\Phi(\varnothing)=C_A$.

\begin{lemma}\label{lem:stream-residual-credit}
Let $F_A$ denote the total cost of physical services following source $A$.
At each $A\to B$ change $e$, let $P_e$ denote the old physical pending set
cleared at time $t_e$. Then
\begin{equation}\label{eq:stream-residual-credit}
 F_A+\sum_e D(P_e,t_e)\le C_A.
\end{equation}
\end{lemma}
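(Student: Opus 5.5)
Since $\Phi$ starts at $C_A$, never increases, and stays nonnegative, it suffices to show that every followed advice service and every $A\to B$ waiting charge is paid by the decrease in $\Phi$ it causes. We track $\Phi(Z)$ as $Z$ grows through physical services and prove two per-event inequalities. Summing them over all events, and using that decreases caused by other physical services (baseline services and $B\to A$ clearances) are nonnegative, gives
\[
 F_A+\sum_e D(P_e,t_e)\le\Phi(\varnothing)-\Phi(Z_{\rm final})\le C_A .
\]

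\emph{Followed advice services.} Suppose that during an $A$ block the algorithm executes $G_j\cap P$ at time $T_j$. Let $Z$ be the set served just before. By Lemma~\ref{lem:stream-pending-invariant}, $P\subseteq P_A$. Hence every identifier of $G_j$ still physically pending lies in $G_j\setminus Z$, and every identifier of $G_j$ not physically pending is in $Z$ (it has arrived, and it is not pending, so it was served). Thus $G_j\cap P=G_j\setminus Z$, and serving it sets the $j$-th summand of $\Phi$ from $c(G_j\setminus Z,T_j)$ to $0$. The decrease equals the physical cost exactly, and the other summands do not change. Recall that malformed identifiers were discarded and advice is feasible, so $G_j$ is unchanged.

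\emph{$A\to B$ clearances.} Let $P_e$ be the physical pending set cleared at $t_e$. By the invariant, $P_e\subseteq P_A(t_e)$, so every $r\in P_e$ lies in an advice batch $G_j$ with $T_j>t_e$, or with $T_j=t_e$ when the triggering service is that batch, which is not yet committed. The batches $G_j\cap P_e$ form a partition of $P_e$ into groups of at most $k$. Each group satisfies $\max_{r\in G_j\cap P_e}w_r(t_e)\le\max_{r\in G_j\setminus Z}(T_j-a_r)$, because $T_j\ge t_e$ and $G_j\cap P_e\subseteq G_j\setminus Z$. By definition of $D$,
\[
 D(P_e,t_e)\le\sum_j\max_{r\in G_j\cap P_e}(T_j-a_r)\le\sum_j\bigl[c(G_j\setminus Z,T_j)-c(G_j\setminus(Z\cup P_e),T_j)\bigr].
\]
The last inequality is the key step and the expected obstacle. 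The difficulty is that removing the subset $G_j\cap P_e$ need not drop the maximum waiting of $G_j$ by the waiting of the removed requests, since older requests may remain in $G_j\setminus(Z\cup P_e)$.

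The plan to handle it is to use that all older remaining requests of $G_j$ have arrived and are not physically pending. Such a request $r$ has $a_r\le t_e$, and it is in neither $Z$ nor $P_e$. But every arrived request is either physically served (in $Z$) or physically pending at $t_e$ (in $P_e$, since the clearance takes all old pending), so no such $r$ exists. Therefore $G_j\setminus(Z\cup P_e)$ consists only of requests arriving after $t_e$. Two cases follow. If this set is empty, the summand drops by the whole of $c(G_j\setminus Z,T_j)\ge\max_{G_j\cap P_e}(T_j-a_r)$. Otherwise its waiting is at most $T_j-t_e$, and its radius term is at most that of $G_j\setminus Z$, so the summand drops by at least
\[
 \max_{r\in G_j\cap P_e}(T_j-a_r)-(T_j-t_e)=\max_{r\in G_j\cap P_e}w_r(t_e).
\]
This is exactly what the first inequality needs, once that inequality is applied with the sharper group bound $w_r(t_e)$ instead of $T_j-a_r$. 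Summing over $j$ gives the clearance inequality. Summing it with the service equalities and the nonnegative other decreases yields~\eqref{eq:stream-residual-credit}.
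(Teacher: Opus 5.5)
Your argument is correct and is essentially the paper's proof: the same residual potential $\Phi$, the exact drop at each followed advice service, and, at each $A\to B$ change, the observation that every surviving identifier of an affected batch has release at least $t_e$, split into the emptied and nonempty cases and summed over the partition $\{G_j\cap P_e\}$. Two small corrections: the middle term of your displayed chain must be $\sum_j\max_{r\in G_j\cap P_e}w_r(t_e)$ rather than $\sum_j\max_{r\in G_j\cap P_e}(T_j-a_r)$, since with the latter the second inequality fails in the nonempty case, as you note yourself; and when an arrival block triggers the change, its requests survive with release exactly $t_e$ (they are in neither $Z$ nor $P_e$), so the survivors arrive at or after $t_e$ rather than strictly after, which still bounds their waiting by $T_j-t_e$.
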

\begin{proof}
We show that each followed advice service reduces $\Phi$ by its
physical cost, and each $A\to B$ clearance reduces $\Phi$ by at least
the corresponding $D$ value. Since $\Phi$ starts at $C_A$ and never
increases, these decreases can be summed.

At a followed advice service $(G_j,T_j)$, its remaining identifiers
$G_j\setminus Z$ are exactly its physical restriction. Deleting them
reduces the $j$th summand of~\eqref{eq:stream-residual-potential} by its
actual service cost.

For an $A\to B$ change at time $t$, we write $P=P_e$. By
Lemma~\ref{lem:stream-pending-invariant}, every request in $P$ is still
pending in virtual advice. Thus each affected original batch has
$T_j\ge t$. We write $Q_j=G_j\cap P$. After deleting $P$, any surviving
identifier in $G_j\setminus Z$ has release at least $t$: all earlier
arrivals were either served physically before this change or belong to
$P$. New arrivals that trigger the change have release exactly $t$.

If an affected batch remains nonempty, its oldest surviving release
therefore advances to at least $t$, reducing its delay cost by at least
$t-\min_{r\in Q_j}a_r$. Its service cost does not increase. If the batch
becomes empty, its entire remaining cost is at least the same amount,
since $T_j\ge t$. The nonempty $Q_j$ form a capacity-feasible partition
of $P$, and hence the total potential drop is at least $D(P,t)$.

All other physical actions, including $B\to A$ cleanup, cause additional
nonnegative potential drops. Summing the charged drops proves the lemma.
In particular, repeated charges involving one original advice batch use
successive decreases of its residual cost, never its original cost again.
\end{proof}

During a $B$ block, advice may serve a request that remains physically
pending. Deleting such an identifier later may affect a past advice
batch. We assign no waiting credit to that deletion: the entire physical
service is paid by the $B$-block bound below. Before returning to $A$, the
full cleanup removes all such identifiers and restores the inclusion in
Lemma~\ref{lem:stream-pending-invariant}.

\suppressfloats[t]
\begin{figure}[!ht]
\centering
\begin{tikzpicture}[x=1cm,y=0.85cm,font=\small]
\node[anchor=west,font=\small\bfseries] at (0,5.2)
  {(a) One advice batch surviving an $A\to B$ cleanup};
\draw[paper arrow] (2,4.4)--(13.4,4.4) node[right] {time};
\foreach \x/\lab in {2/a_{\rm old},6/t,9/a_{\rm new},12.5/T_j}{
  \draw (\x,4.28)--(\x,4.52);
  \node[above] at (\x,4.55) {$\lab$};}
\node[anchor=east] at (1.8,3.55) {before};
\filldraw[draw=figblue,fill=figblue!10] (2,3.3) rectangle (12.5,3.8);
\node at (7.25,3.55) {delay $T_j-a_{\rm old}$};
\node[anchor=east] at (1.8,2.55) {after};
\fill[figgreen!17] (2,2.3) rectangle (6,2.8);
\fill[black!8] (6,2.3) rectangle (9,2.8);
\draw[dashed,figgreen] (2,2.3) rectangle (9,2.8);
\filldraw[draw=figblue,fill=figblue!10] (9,2.3) rectangle (12.5,2.8);
\node[font=\footnotesize] at (4,2.55) {charged: $t-a_{\rm old}$};
\node[font=\footnotesize] at (7.5,2.55) {extra drop};
\node[font=\footnotesize] at (10.75,2.55) {$T_j-a_{\rm new}$};
\node[align=center] at (7.25,1.65)
  {$a_{\rm new}\ge t$ implies delay decrease $\ge t-a_{\rm old}$.};
\node[anchor=west,font=\small\bfseries] at (0,0.75)
  {(b) Successive deletions use successive potential drops};
\node[paper box,fill=figblue!8] (p0) at (2,-0.1) {$\Phi(Z_0)=C_A$};
\node[paper box,fill=figblue!8] (p1) at (5.5,-0.1) {$\Phi(Z_1)$};
\node[paper box,fill=figblue!8] (p2) at (9,-0.1) {$\Phi(Z_2)$};
\node (p3) at (12,-0.1) {$\cdots\ \ge0$};
\draw[paper arrow] (p0)--node[above] {$-\Delta_1$}(p1);
\draw[paper arrow] (p1)--node[above] {$-\Delta_2$}(p2);
\draw[paper arrow] (p2)--node[above] {$-\Delta_3$}(p3);
\node at (7,-1.1)
  {$\displaystyle F_A+\sum_{e:A\to B}D(P_e,t_e)
       \le\sum_\ell\Delta_\ell\le C_A.$};
\end{tikzpicture}
\caption{Residual charging pays for elapsed waiting only once.
In (a), the original advice batch is served at $T_j\ge t$; after all
old physical pending requests are deleted, its oldest surviving arrival
is at least $t$. The diagram shows the nonempty-residual case, and
service cost can only decrease as well. In (b), the potential retains
all original advice batches, including past ones. Deletions in $B$ blocks yield nonnegative drops but no waiting credit;
the block budgets pay their physical cost.}
\label{fig:residual-advice-charge}
\end{figure}

Figure~\ref{fig:residual-advice-charge} shows the waiting credit at one
$A$ exit and how credits telescope.

\subsection{Consistency and robustness}
We next pay for the service part of each $A$ exit and for complete
$B$ blocks. The following budget bounds are proved in
Appendix~\ref{sec:advice-analysis}. Their key property is that a budget
level cannot be reused by distinct blocks of the same source.

Let $\mathcal E$ denote the set of $A\to B$ changes, and let
$b_e^+$ denote the newly selected $B$ level at change $e$. The service-cost certificate gives
\begin{equation}\label{eq:stream-exit-space}
 S(P_e)\le H_B^{\rm post}<\delta b_e^+.
\end{equation}
This uses the post-event baseline state to cover all old requests, even
when the triggering event is an arrival or a service. The cleanup itself
uses the pre-event physical pending set.

For a block $J$, let $b_J$ denote its last occupied level and $F_J$
the cost of its followed physical services, excluding exit cleanup.
If the block ends at a switch, $P_J$ and $t_J$ denote the old pending
set and exit time. For a final block without exit cleanup, we use
$D(P_J,t_J)=0$ below.
\begin{samepage}
\begin{lemma}\label{lem:stream-block-budgets}
For an $A$ block,
\begin{equation}\label{eq:stream-A-block}
 F_J+(1+\eta)D(P_J,t_J)\le(1+\eta)b_J.
\end{equation}
For a $B$ block, the followed services together with its exit cleanup
cost at most $2\delta b_J$.
\end{lemma}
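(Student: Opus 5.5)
The plan is to bound both kinds of block by the score of the active source, using three facts. The active score stays at most its cap throughout a block. Followed services are paid for by increases of the active score. The exit cleanup is paid for by the pending part of the active score at the exit time.

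\emph{Step 1: followed services are paid by score increases.} Fix an $X$ block $J$. By Lemma~\ref{lem:stream-pending-invariant}, the physical pending set is a subset of $P_X$ throughout the block. Hence, when the algorithm follows a virtual $X$ service $(G,t)$, it serves only $G\cap P\subseteq G$. Both radius and age are maxima over the batch, so $c(G\cap P,t)\le c(G,t)$.

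\emph{Step 2: at the exit, the score still contains the pending lower bounds.} At the exit event, the paid cost $p_X$ includes every followed service of the block, together with the virtual costs the block started with. By Lemma~\ref{lem:stream-source-score}, the score $H_X$ is nondecreasing. Scores at the block start are nonnegative. Therefore
\[
F_J+S(P_X(t_J))+D(P_X(t_J),t_J)\le H_X(t_J)\le \operatorname{cap}(X).
\]
Here $H_X(t_J)$ is evaluated just before the exit event is committed. This is exactly the reason selection uses tentative post-event scores while cleanup uses pre-event pending sets: the pre-event score is still at most the cap. Any jump that exhausts the cap belongs to the uncommitted event, which is not executed physically after a switch. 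The block's last level $b_J$ gives $\operatorname{cap}(A)=b_J$ and $\operatorname{cap}(B)=\delta b_J$. Also $P_J\subseteq P_X(t_J)$, and both $S$ and $D$ are monotone under taking subsets. Hence
\[
F_J+S(P_J)+D(P_J,t_J)\le\operatorname{cap}(X).
\]

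\emph{Step 3: the $A$ block.} Multiply the inequality from Step 2 by $1+\eta$ and drop $S(P_J)\ge0$. This gives $(1+\eta)F_J+(1+\eta)D(P_J,t_J)\le(1+\eta)b_J$, which implies~\eqref{eq:stream-A-block} because $F_J\le(1+\eta)F_J$. The inequality deliberately leaves out the $(1+1/\eta)S(P_J)$ service part of the biased clearance. That part is charged separately through~\eqref{eq:stream-exit-space}.

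\emph{Step 4: the $B$ block.} The exit cleanup is the $B\to A$ ordinary clearance, which costs at most $2(S(P_J)+D(P_J,t_J))$ by Lemma~\ref{lem:stream-two-clearances}. Adding the followed services gives a total of at most
\[
F_J+2\bigl(S(P_J)+D(P_J,t_J)\bigr)\le 2\bigl(F_J+S+D\bigr)\le2\delta b_J.
\]
This cleanup also clears stale identifiers that virtual advice already served during the block. Those identifiers are still in $P$, but we need them in $P_B$ for the argument. They are: advice services do not affect $P_B$, and Lemma~\ref{lem:stream-pending-invariant} gives $P\subseteq P_B$, so these identifiers are covered. A final block has no cleanup, so its bound follows from Step 2 alone.

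\emph{Main obstacle.} The delicate step is the exit-time accounting at a discrete triggering event: showing that the cleared set and the already-paid followed services are measured against a score that has not yet exceeded the cap. I would handle this by a careful case check over arrival, virtual-service, and timer triggers, following the ordering rules in Algorithm~\ref{alg:monitor-events}. For a timer trigger, continuity gives $H_X(t_J)=\operatorname{cap}(X)$. For an event trigger, the event is uncommitted when cleanup happens, so the pre-event score is at most the cap. In addition, a block spanning several levels needs no cleanup at same-source level changes, so only its last level $b_J$ appears in the bound.
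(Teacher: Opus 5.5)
Your proposal is correct and follows the paper's proof essentially step for step: restricting source services bounds $F_J$ by the paid-cost increment of the active source, the pre-event score is still within the cap of the last level $b_J$, subset monotonicity transfers $S$ and $D$ from $P_X$ to $P_J$, and the $A$ bound follows after multiplying by $1+\eta$ while the $B$ bound follows from the ordinary clearance. Your appeal to monotonicity of $H_X$ in Step 2 is unnecessary; $p_X\ge0$ at the block start already gives $F_J\le p_X(t_J)$.
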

\end{samepage}

\begin{lemma}\label{lem:stream-level-accounting}
At a fixed level there is at most one $A$ block and one $B$ block.
The last levels of distinct blocks of the same source are distinct,
as are the newly selected levels $b_e^+$ at $A\to B$ changes. If advice
is complete and feasible, then
\begin{equation}\label{eq:stream-advice-level-sums}
 \sum_{e\in\mathcal E}b_e^+\le2C_A,
 \qquad \sum_{J:\,B\text{ block}}b_J\le2C_A.
\end{equation}
For arbitrary advice, the largest level obeys
\begin{equation}\label{eq:stream-largest-level}
 b_{\max}\le2C_B/\delta,
\end{equation}
and the sum of all opened levels is at most $2b_{\max}$.
\end{lemma}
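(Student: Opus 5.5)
The plan is to prove each claim from three facts. Levels are the geometric sequence $b_n=2^nb_0$ and never decrease. By Lemma~\ref{lem:stream-source-score}, scores are nondecreasing and bounded by the cost of a complete feasible schedule. Finally, \textsc{Select} is invoked only when the \emph{active} score reaches its cap.

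\emph{One block per source per level.} First I will show that a block can end only through exhaustion. A block ends only inside \textsc{Select}, and \textsc{Select} is called only when the active source $X$ has (tentative) score at least $\operatorname{cap}(X)$ at the current level $b$. Preference for $A$ does not trigger a selection by itself: a $B$ block is never interrupted merely because $A$ becomes eligible.

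Now suppose an $X$ block ends at level $b$. At that moment $H_X\ge\operatorname{cap}(X)$ at level $b$. Because $H_X$ is nondecreasing and caps at a fixed level are fixed, $X$ can never again be eligible at level $b$. Since levels never decrease, any later $X$ block occupies only strictly larger levels. A block occupies a contiguous range of levels, so the level sets of distinct $X$ blocks are disjoint. This gives both claims: at most one block per source per level, and distinct last levels. The level $b_e^+$ is the first level of the $B$ block started by $e$, so disjointness of $B$ blocks' level sets also makes the $b_e^+$ distinct.

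\emph{Advice sums.} Assume advice is complete and feasible. At an $A\to B$ change $e$, \textsc{Select} chose $B$ at level $b_e^+$, so $A$ was ineligible there: $\widehat H_A\ge b_e^+$. The tentative post-event score equals the actual score $H_A$ just after the event is committed. By Lemma~\ref{lem:stream-source-score} it is therefore at most $C_A$, giving $b_e^+\le C_A$.

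The same argument handles $B$ blocks. The last level $b_J$ of a $B$ block was reached inside some selection that kept or chose $B$ at level $b_J$. In that selection $A$ was ineligible, so $C_A\ge H_A\ge b_J$.

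In both cases the levels are distinct elements of $\{2^nb_0\}$ bounded by $C_A$. A geometric sum of distinct such levels is at most twice the largest, which yields~\eqref{eq:stream-advice-level-sums}.

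\emph{Largest level.} This bound holds for arbitrary advice. Every doubling from $b$ to $2b$ happens only when both sources are ineligible, in particular when $H_B\ge\delta b$. The baseline $B$ is a complete feasible online schedule, so $H_B\le C_B$ and hence $b\le C_B/\delta$. Any level reached by doubling is therefore at most $2C_B/\delta$. If no doubling ever occurs, then $b_{\max}=b_0=2S(J)$. Since $S(J)\le C_B$ by~\eqref{eq:stream-spatial-certificate} and $\delta<1$, we again get $b_{\max}\le 2C_B/\delta$. The opened levels are $b_0,2b_0,\dots,b_{\max}$, whose sum is below $2b_{\max}$.

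The step I expect to need the most care is the use of tentative post-event scores at discrete events. I must check that the inequality certified inside \textsc{Select} (for example $\widehat H_A\ge b$) refers to a quantity that is still bounded by $C_A$, or $C_B$. For this I will observe that the tentative score is exactly the real score after the event is committed: arrivals are admitted to both virtual queues, and the virtual service is committed regardless of the switch. The monotone bound of Lemma~\ref{lem:stream-source-score} then applies at that time.
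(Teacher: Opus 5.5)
Your proposal is correct and follows essentially the same route as the paper. Monotone scores with nondecreasing levels give the distinctness claims, and preference for $A$ gives $b\le H_A\le C_A$ at every level where $B$ is selected or kept. Doubling only when $H_B\ge\delta b$ (with $b_0=2S(J)\le 2C_B$) bounds $b_{\max}$. Your extra check that the tentative post-event scores equal the committed scores, so that Lemma~\ref{lem:stream-source-score} applies to them, is left implicit in the paper and you handle it correctly.
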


\begin{proof}[Proof of Theorem~\ref{thm:feasible-stream-monitor}]
All physical batches are feasible by construction. For consistency,
we partition physical cost into followed $A$ services, $A\to B$
cleanups, and complete $B$ blocks including their exit cleanups.
Lemma~\ref{lem:stream-residual-credit} pays for followed advice services
and the waiting part of leaving $A$. The service part is bounded
by~\eqref{eq:stream-exit-space}, and Lemma~\ref{lem:stream-block-budgets}
pays for complete $B$ blocks. Using the two geometric sums in
Lemma~\ref{lem:stream-level-accounting} gives
\begin{align*}
 \ALG
 &\le F_A+(1+\eta)\sum_{e\in\mathcal E}D(P_e,t_e)
       +(1+1/\eta)\delta\sum_{e\in\mathcal E}b_e^+
       +2\delta\sum_{J:\,B\text{ block}}b_J\\
 &\le\bigl[1+\eta+2\delta(3+1/\eta)\bigr]C_A
  =(1+\eps)C_A.
\end{align*}
The residual-advice bound controls the first two terms by
$(1+\eta)C_A$. The $B$-block term includes the cost of requests already
served by virtual advice; these require no further advice charge.

For robustness, we sum~\eqref{eq:stream-A-block} over $A$ blocks,
bound their additional service-cost cleanup terms by
\eqref{eq:stream-exit-space}, and apply the $B$-block bound. These
comparisons use no bound on $C_A$. The last levels of $A$ blocks, the
last levels of $B$ blocks, and the newly selected levels at $A\to B$
switches each form a sequence without repetition. Bounding each sum
by $2b_{\max}$ gives
\[
 \ALG
 \le2\bigl[(1+\eta)+\delta(3+1/\eta)\bigr]b_{\max}
 \le4\left[\frac{1+\eta}{\delta}+3+\frac1\eta\right]C_B
  =\left(\frac{32}{\eps^2}+\frac{72}{\eps}+36\right)C_B.
\]
Since $C_B\le R\OPT(I)$, this is~\eqref{eq:stream-robustness}.
Completion and the polynomial-time implementation are verified in
Appendix~\ref{sec:advice-analysis}.
\end{proof}

\subsection{A lower bound on robustness}
The dependence on $\eps$ is necessary even on inputs that fit in one
batch at one location. Since these are also LAD inputs, the following
lower bound applies to the general advice problem.

\begin{theorem}\label{thm:stream-lower}
Fix $k\ge1$ and $\eps\ge0$. For maximum delay at one location, suppose a
deterministic algorithm is $(1+\eps)$-consistent against every feasible
online-revealed advice stream and $R$-robust. Already on inputs with at most
$k$ requests it must satisfy
\begin{equation}\label{eq:stream-lower}
 R\ge 1+\frac{k-1}{1+\eps(k-1)}.
\end{equation}
Consequently, a robustness bound uniform in $k$ must satisfy
$R\ge1+1/\eps$ for $\eps>0$; at $\eps=0$ no finite such bound is possible.
\end{theorem}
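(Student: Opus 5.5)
The bound should come from a two-scenario adversary at a single location, where all requests fit in one batch. At every step the adversary reveals advice that is locally good. The algorithm then faces a dilemma. If it follows the advice, the advice stream is feasible but the offline optimum is much cheaper, which hurts robustness. If it deviates to save cost, the adversary stops the input, and consistency against the revealed advice is violated.

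\emph{Construction.} Put all requests at radius $x$ (to be tuned) and release them at times $0,1,\ldots,k-1$, so at most $k$ requests appear and one batch is always feasible. The revealed advice serves each request as a singleton at its arrival time. If the input stops after $j$ requests, this advice is complete and feasible with cost $jx$. The offline optimum on the full input serves everything once at time $k-1$ and pays $x+(k-1)$. The algorithm is deterministic and sees the advice only at its service times. So it cannot distinguish, at any moment, the continuation in which more requests arrive from the one in which the input ends.

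\emph{Key steps.} (1) Consider the first time the algorithm holds a request past the time at which the advice serves it. From then on, or by the time of the next arrival, it has paid extra waiting. Stop the input at the point that makes this deviation most expensive relative to $jx$, and derive the consistency constraint $\ALG\le(1+\eps)jx$. This bounds how much waiting the algorithm can accumulate across the $k-1$ gaps of length one. In aggregate, it leaves only an $\eps$-fraction of the advice cost as slack for merging requests. (2) If the algorithm never deviates, it pays $kx$ against $\OPT\le x+k-1$. If it merges, each merge of a request into a later batch costs it at least one unit of waiting; consistency caps the total waiting by $\eps$ times the advice cost. Then express $\ALG$ on the full input as at least $kx$ minus the savings from merges plus the waiting those merges cost, with the merges limited by the consistency budget. (3) Choose $x$ to balance the two regimes. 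I expect the optimum to be around $x=1+\eps(k-1)$ up to normalisation. The resulting ratio $\ALG/\OPT$ should simplify to $1+(k-1)/(1+\eps(k-1))$. (4) Letting $k\to\infty$ gives $R\ge1+1/\eps$. At $\eps=0$, any waiting violates exact consistency, so the algorithm must serve every request immediately and $R\ge k$ grows without bound.

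\emph{Main obstacle.} The hardest part is step (2): turning the deviation-and-stop argument into a clean inequality that handles \emph{every} deterministic strategy, including ones that merge only some requests or wait only fractionally. My first attempt would be a potential argument over the arrival index. Let $s_j$ be the waiting already paid by pending requests when request $j$ arrives. Consistency when the input stops at $j$ then constrains $s_j$ together with the services completed so far. Summing these constraints should give the required linear tradeoff. If the unit spacing makes the algebra fail to produce exactly the stated expression, I would instead vary the arrival gaps or the stopping rule and re-tune $x$.
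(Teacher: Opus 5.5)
\textbf{There is a genuine gap.} Your inputs fix the arrival times $0,1,\dots,k-1$ independently of the algorithm, and a consistent algorithm can then merge. Consistency on the $j$-request prefix only requires $\ALG\le(1+\eps)jx$, and the slack $\eps jx$ grows with $j$. Once the slack exceeds the unit gap, the algorithm can hold requests until an integer time passes with no arrival, and no prefix constraint is violated. So the premise of your step~(2), that the consistency budget confines merging to an $\eps$-fraction, is false.

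Take $k=2$ and $\eps=1$, where the theorem claims $R\ge 3/2$. On your inputs, a $2$-consistent rule achieves ratio at most $\sqrt2$:
\begin{itemize}
\item Serving on arrival gives ratio $\max\{1,2x/(x+1)\}$.
\item Holding request~$1$ until time~$1$ is $2$-consistent for $x\ge1$. It gives ratio $1+1/x$ if the input stops, and ratio $1$ otherwise, since both requests are then served together for $x+1=\OPT$.
\end{itemize}
Use the first rule for $x<1+\sqrt2$ and the second for $x\ge 1+\sqrt2$; every ratio stays at most $\sqrt2$. For small $\eps$ and any $k$, the same slack-then-merge idea keeps all ratios on your inputs of order $\eps^{-1/2}$. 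So step~(4) cannot give $1+1/\eps$ either. Your $\eps=0$ argument is fine.

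\textbf{The missing idea} is to make the arrivals depend on the algorithm, which is legitimate against a deterministic algorithm. Normalize the setup cost to one. Release request $i+1$ a small time $\delta_i$ after the algorithm serves request $i$; it must serve it, by consistency on the input that stops there. Advise an immediate singleton service each time.
\begin{itemize}
\item The algorithm never holds two requests at once, so every online batch is a singleton. It pays at least $k+X$, where $X=\sum_{i<k}x_i$ is its total waiting on the first $k-1$ requests.
\item Its run on the input stopped after $k-1$ requests is identical. Consistency there gives $X\le\eps(k-1)$.
\item The algorithm's own waiting is what spaces the arrivals. Serving everything at the last arrival therefore costs at most $1+X+\Delta$, with $\Delta=\sum_{i<k}\delta_i$.
\end{itemize}
For $\Delta<k-1$, the ratio $(k+X)/(1+X+\Delta)$ decreases in $X$. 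Substituting $X=\eps(k-1)$ and letting $\Delta\to0$ gives exactly the stated bound, with no radius to tune.
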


The proof is in Appendix~\ref{sec:advice-analysis}; the optimal dependence
on $\eps$ is open.

\section{Conclusion and future directions}
\label{sec:conclusion}

The constant competitive ratio for capacitated LAD leaves several
questions open. What are the optimal deterministic and randomized
competitive ratios? Our bounds show that capacity does not force a
logarithmic loss, but a gap remains even against the lower bounds from
a single location. For online advice, our robustness is
$O(\eps^{-2})$, while the deterministic lower bound is
$\Omega(\eps^{-1})$ uniformly in capacity. Closing this gap is another
natural question.

Extending the constant guarantee beyond a line is a further direction.
Does capacitated JRP with maximum delay admit a constant competitive
ratio or a constant approximation ratio? For rooted-tree MLA, how
should the competitive ratio depend on capacity and tree depth?
The LAD admission rule uses nested intervals to control both capacity
and the cost of opening batches. For branching service costs, requests
can share some edges while paying separately for others; an extension
must account for both forms of cost sharing.

Capacity constraints also merit study in online service and
facility-location problems with waiting costs. How does maximum delay
interact with capacity that can be reused through an opened facility,
rather than assigned to a single instantaneous service?

\appendix
\section{Capacitated TCP acknowledgment with maximum delay}
\label{sec:online}
The TCP algorithm below handles the single-location special case of LAD
and each distance class in Section~\ref{subsec:class-baseline}.
We prove its optimal deterministic competitive ratio of two, followed
by the randomized lower bound stated in the introduction.
\subsection{A 2-competitive algorithm}

For a fixed setup cost $w>0$, a batch costs $w$ plus its maximum age.
The algorithm keeps pending requests in arrival order. Whenever at
least $k$ are pending, it serves the $k$ oldest and repeats if necessary.
Otherwise, it serves the entire pending set when its oldest request
reaches age $w$.  A full-batch trigger
has priority over a simultaneous timer.  Every resulting batch costs at
most $2w$.

Algorithm~\ref{alg:fixed-location} records this rule. All services occur at the
current event time. A stale timer is replaced whenever its buffer changes.
\begin{algorithm}[!ht]
\caption{The fixed-location full-or-age rule}\label{alg:fixed-location}
\begin{algorithmic}[1]
\Function{FullOrAge}{$k,w$}
 \State Maintain an arrival-ordered queue $Q$.
 \State On an arrival block, append its requests to $Q$ in a fixed tie order.
 \While{$|Q|\ge k$}
  \State Serve and remove the first $k$ requests.
 \EndWhile
 \State After arrival actions, if $Q\ne\varnothing$ and its oldest age is at least $w$,
        serve all of $Q$ and empty it.
 \State Otherwise schedule the nonempty queue's timer at its oldest arrival plus $w$.
\EndFunction
\State A timer serves the entire nonempty queue. If $w=0$, serve arrivals immediately.
\end{algorithmic}
\end{algorithm}

\begin{lemma}[Greedy short prefixes]
\label{lem:short-prefixes}
For requests at one location, call a batch short if it has at most $k$
members and arrival span at most $w$.  The greedy partition into maximal arrival-ordered short prefixes
minimizes the number of short batches, even when nonconsecutive groups
are allowed.
\end{lemma}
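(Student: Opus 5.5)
We will prove the greedy bound with an exchange (or staying-ahead) argument on the arrival order, not by induction over arbitrary partitions. Sort the requests as $r_1,\dots,r_n$ by arrival time, breaking ties in the fixed order. The greedy partition cuts this sequence into blocks $G_1,G_2,\dots,G_g$. Each $G_i$ starts at the first request not yet covered and extends as far as possible while it has at most $k$ members and span at most $w$. Let $s_i$ be the index of the first request of $G_i$. The aim is to show that every partition into short batches, consecutive or not, has at least $g$ batches.

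\textbf{Key step (a counting lower bound).} Consider the greedy start requests $r_{s_1},\dots,r_{s_g}$. We claim that in any short partition $\Pi$ the requests $r_{s_i}$ are pairwise in different batches. It is not enough to compare only $s_i$ and $s_{i+1}$, since a nonconsecutive batch might contain $r_{s_i}$ and $r_{s_j}$ with $j\ge i+2$. We therefore use a different witness.

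For each $i$, greedy stopped $G_i$ for one of two reasons.
\begin{itemize}
\item[(a)] The span bound failed: $a_{r_{s_{i+1}}}-a_{r_{s_i}}>w$.
\item[(b)] The count bound failed: $G_i$ already has $k$ members, all with arrival in $[a_{r_{s_i}},a_{r_{s_i}}+w]$.
\end{itemize}

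Case (a) gives a clean separation. No short batch can contain a request from $\{r_j:j\le s_i\}$ together with one from $\{r_j: j\ge s_{i+1}\}$ that arrives more than $w$ later.

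Case (b) is the main obstacle, because a short batch can legitimately skip requests. To handle it, we switch to a sweep invariant over any short partition $\Pi$. Sort the batches of $\Pi$ by their earliest member. Let $N_t$ be the number of batches of $\Pi$ whose earliest member has index at most $t$. We prove $N_{s_i}\ge i$ by induction on $i$.

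The induction proceeds as follows. Suppose $\Pi$ covers $r_1,\dots,r_{s_{i+1}-1}$ using only batches whose earliest member lies in $[1,s_i]$. Consider the batches that cover the members of $G_i$ and the later-arriving members of earlier greedy blocks. Each such batch has span at most $w$ and contains at most $k$ requests. By the greedy maximality of $G_1,\dots,G_i$, a direct count shows that at most $i$ batches cannot cover all of $r_1,\dots,r_{s_{i+1}-1}$. Concretely, one compares the first $i$ batches of $\Pi$ with the greedy prefix: the first $t$ batches of any short partition cover at most the requests in $G_1\cup\dots\cup G_t$, when measured as the number of arrival-ordered prefix requests covered.

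\textbf{Formalizing the comparison.} The cleanest route is to define $c_t$ as the largest index $m$ such that $r_1,\dots,r_m$ are all covered by some $t$ short batches, and to prove $c_t\le s_{t+1}-1$ by induction. Given $t$ batches covering $r_1,\dots,r_m$, the batch containing $r_{c_{t-1}+1}$ is handled by an exchange. We replace it by the consecutive block starting at $r_{c_{t-1}+1}$ that contains all uncovered prefix requests. That block has at most as many members as needed and span at most the original span, because those members lie between $r_{c_{t-1}+1}$ and the latest covered element. By induction, this consecutive block starts no earlier than $s_t$, so greedy maximality caps its end at $s_{t+1}-1$.

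Setting $t=|\Pi|$ and $m=n$ then forces $|\Pi|\ge g$. The delicate point is justifying that the exchanged batch still has at most $k$ members and span at most $w$. We expect to handle it by observing that its members are a subset of the original batch's members restricted to uncovered prefix indices, plus nothing new.
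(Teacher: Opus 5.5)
Your target claim $c_t\le s_{t+1}-1$, that $t$ short batches cover at most the first $s_{t+1}-1$ requests, would suffice. The induction step meant to prove it does not work, and that step is where the whole difficulty lies. Given $t$ short batches covering $r_1,\dots,r_m$, nothing forces $t-1$ of them to cover $r_1,\dots,r_{c_{t-1}}$ and the last one to cover $r_{c_{t-1}+1},\dots,r_m$. With nonconsecutive batches, members of different batches interleave. The consecutive block $r_{c_{t-1}+1},\dots,r_m$ therefore generally contains members of other batches. It is \emph{not} a subset of the batch containing $r_{c_{t-1}+1}$, which was your intended justification. For example, with $k=2$, four simultaneous requests and batches $\{r_1,r_3\},\{r_2,r_4\}$, the block $\{r_3,r_4\}$ is not contained in $\{r_1,r_3\}$. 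So the block's shortness does not follow from that of any original batch. Asserting that it is short is exactly the content of the step being proved, so the argument is circular. The sweep invariant $N_{s_i}\ge i$ is a restatement of the same claim, and its ``direct count'' is never supplied.

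There is also a direction slip. The hypothesis gives $c_{t-1}+1\le s_t$, so the block starts no \emph{later} than $s_t$; only such a block has its end capped at $s_{t+1}-1$ by maximality of $G_t$. Your first ``key step'' is false in case (b): with $k=2$ and three simultaneous requests, the short partition $\{r_1,r_3\},\{r_2\}$ puts $r_{s_1}$ and $r_{s_2}$ in the same batch.

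The missing ingredient is an exchange that removes interleaving; the paper's proof supplies it.
\begin{itemize}
\item Let $a$ be the earliest arrival and $G$ the first greedy block. Let $Q$ be the batch of an arbitrary short partition that contains the earliest request. Then $|Q|\le|G|$, and every member of $Q$ arrives in $[a,a+w]$.
\item Swap each $z\in Q\setminus G$ with a distinct $y\in G\setminus Q$. Since $z$ arrives no earlier than $y$ and no later than $a+w$, the batch receiving $z$ keeps its cardinality. Its earliest arrival $b\ge a$ does not decrease, and the inserted arrival is at most $b+w$, so its span stays at most $w$.
\item Move any remaining requests of $G\setminus Q$ into $Q$. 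Now $Q=G$ and no batch has been added.
\item Delete $G$ and induct on the remaining requests.
\end{itemize}
An equivalent route is an uncrossing lemma: any short partition can be made consecutive without increasing its number of batches. With either tool, your $c_t$ argument becomes immediate; without one, the proposal does not prove the lemma.
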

\begin{proof}
Let $a$ denote the earliest arrival, and let $G$ denote the first
$q=\min\{k,|\{r:a_r\le a+w\}|\}$ requests in arrival order.  In any
short partition, the batch $Q$ containing the first request has
$|Q|\le q$ and all its arrivals in $[a,a+w]$.

For each request $z\in Q\setminus G$, we choose a distinct request
$y\in G\setminus Q$ and swap $z$ into the batch containing $y$.
There are enough such $y$ because $|Q|\le|G|$.  Requests of $G$ precede
all displaced requests in arrival order.  Thus an inserted $z$ arrives
no earlier than the removed $y$, and no later than $a+w$.  In a receiving
batch with original earliest arrival $b\ge a$, the new earliest arrival
does not decrease, and every inserted arrival is at most $a+w\le b+w$.
Its span remains at most $w$ and its cardinality does not increase.
We move any remaining requests of $G\setminus Q$ into $Q$, deleting them
from their old batches.  The batch $Q$ has become exactly $G$, and no
additional batch has been created. Empty batches are deleted.

We remove $G$ and repeat this exchange on the remaining instance. Induction
shows that the greedy partition uses no more batches than the original
one.
\end{proof}

\begin{proposition}[TCP upper bound]
\label{prop:fixed-unit-two}
For capacity $k$, full-or-age-$w$ is $2$-competitive for TCP.
\end{proposition}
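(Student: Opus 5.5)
The plan is to compare the algorithm with an optimal offline partition batch by batch. Each algorithm batch costs at most $2w$. Against a lower bound, we will show that every feasible offline schedule has cost at least $w$ times the minimum number of short batches, where short means at most $k$ members and arrival span at most $w$. We will then show that the algorithm opens at most as many batches as the greedy partition of Lemma~\ref{lem:short-prefixes}, which gives $\ALG\le 2w\cdot N_{\rm greedy}\le 2\OPT$.

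\textbf{Lower bound on $\OPT$.} Take any feasible offline partition with batches $Q_1,\dots,Q_m$, where each $Q_i$ has span $D_i$. Batch $Q_i$ costs at least $w+D_i$. Split $Q_i$ in arrival order into consecutive pieces of span at most $w$ each, opening a new piece whenever an arrival exceeds the current piece's first arrival by more than $w$. Successive piece starts are more than $w$ apart, so the number of pieces is at most $1+D_i/w$. Each piece has at most $k$ members because it is a subset of $Q_i$. The pieces therefore form a short partition of total count at most $\sum_i(1+D_i/w)\le \OPT/w$. By Lemma~\ref{lem:short-prefixes}, the greedy count $N$ satisfies $wN\le\OPT$.

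\textbf{Upper bound on $\ALG$.} We argue that full-or-age serves exactly the greedy short prefixes, or at least never more batches. Consider the pending queue after a service. Its oldest request begins the next greedy prefix, which has at most $k$ requests arriving within $w$ of that oldest request.
\begin{itemize}
\item If $k$ requests arrive by time $a+w$, the full trigger fires exactly when the $k$-th arrives. It serves the first $k$, which form the greedy prefix.
\item Otherwise, the age timer fires at $a+w$ and serves everything pending. Because arrival blocks at time $a+w$ are processed first, this is precisely the set of requests arriving in $[a,a+w]$, which is again the greedy prefix.
\end{itemize}
Ties need care: simultaneous arrivals are ordered by a fixed tie order, and repeated full services within one arrival block correspond to consecutive greedy prefixes of span zero. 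In each case the served batch has age at most $w$ at service, so its cost is at most $2w$.

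The main obstacle is this exact correspondence. The main subtleties are three:
\begin{itemize}
\item the queue after a full service may already contain requests whose age exceeds $w$ relative to later time;
\item the timer must be reset from the new oldest request;
\item a full trigger takes priority over a simultaneous timer.
\end{itemize}
We will handle these by induction on the sequence of services, keeping the invariant that the pending queue equals a suffix of the arrival order starting at a greedy prefix boundary. Combining the two bounds gives $\ALG\le 2wN\le 2\OPT$.
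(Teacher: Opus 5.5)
Your proposal is correct and follows the paper's proof essentially step for step. You identify the full-or-age batches with the greedy maximal short prefixes, each costing at most $2w$; you split every offline batch of span $D$ into at most $1+D/w$ pieces of span at most $w$, which gives a short partition of size at most $\OPT/w$; and you conclude with Lemma~\ref{lem:short-prefixes}. Your first listed subtlety is actually vacuous: before an arrival block fewer than $k$ requests are pending, so after the full services every request left in the queue arrived in the current block, has age zero, and simply restarts the timer.
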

\begin{proof}
Its batches are exactly the maximal short prefixes: a prefix ends when
it reaches cardinality $k$ or when the first request's age reaches $w$.
Boundary arrivals are included before a timer is processed. Let $G$
denote the number of algorithm batches.

We consider any offline batch of at most $k$ requests, with arrival
span $\Delta$. Its members are divided into greedy prefixes of span at most
$w$, without a cardinality restriction; each piece still has at most
$k$ members.  If there are $h\ge2$ pieces, consecutive first arrivals
differ by more than $w$, and hence $(h-1)w<\Delta$.  For $h=1$ the bound
$hw\le w+\Delta$ is immediate.  Thus its pieces satisfy
$hw\le w+\Delta$, no larger than that comparison batch's actual cost.
Applying this to all comparison batches gives a feasible short partition
with number $M$ satisfying $wM\le\cost(\mathrm{SOL})$.
Lemma~\ref{lem:short-prefixes} gives $G\le M$.  Hence
$\ALG\le2wG\le2\cost(\mathrm{SOL})$.
\end{proof}

\subsection{A tight deterministic lower bound}
\begin{theorem}\label{thm:fixed-k-two}
For every fixed $k\ge2$, every deterministic algorithm for capacitated
TCP acknowledgment with maximum delay has competitive ratio at least two.
\end{theorem}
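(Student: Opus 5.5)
We use an adaptive adversary at a single location with setup cost $w=1$ (scaling is free). Requests arrive only while the previous ones are still pending. The argument has two ingredients.

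The first is the classical TCP-acknowledgment lower bound of two, which already holds for $k=2$ and in fact for uncapacitated maximum delay. The adversary releases one request at time $0$. It then releases a new request immediately after each service, whenever the algorithm's batch leaves nothing pending.

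In more detail, we first place one request $r_1$ at time $0$ and let $t_1$ be the time at which the algorithm serves it; if it never does, the ratio is unbounded. At time $t_1$, right after the service, we release the next request and repeat. The algorithm therefore serves $n$ singleton batches at times $t_1,t_1+t_2,\ldots$ and pays $\sum_i(1+t_i)=n+T$, where $T=\sum_i t_i$ is the final time.

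For the offline comparison we pair consecutive requests. Requests $2j-1$ and $2j$ go in one batch of size $2\le k$, served at the arrival of request $2j$, for cost $1+t_{2j-1}$. Alternatively, each request can be served on arrival for cost $1$. We take the better of two schedules: the all-singleton schedule, costing $n$, and the paired schedule, costing about $n/2+\sum_{\text{odd }i}t_i$. Shifting the pairing by one index gives a second paired schedule, costing about $n/2+\sum_{\text{even }i}t_i$. Averaging the two pairings gives $\OPT\le n/2+T/2+O(1)$. Combined with $\OPT\le n$, this yields $\OPT\le\min\{n,(n+T)/2\}+O(1)$.

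This only gives ratio $\ge(n+T)/\min\{n,(n+T)/2\}\ge2$ asymptotically. Because competitiveness is defined without an additive constant, we control the $O(1)$ boundary terms by taking $n$ large, or by choosing which pairing to use according to the realized $t_i$ so that the boundary request is absorbed. Since $k\ge2$ is fixed and every batch has size at most two, capacity never matters, so the bound holds for every fixed $k\ge2$.

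The step I expect to be the obstacle is the boundary accounting, made precise against an algorithm that may serve nonsingleton batches or serve instantly. If the algorithm serves instantly ($t_i=0$), it pays $n$ while offline can pair requests. But two requests can only be paired if they are pending together, and an instant service means the next request arrives only after the previous one is gone. So the adversary instead releases the next request just before the service, at time $t_i-\epsilon$, and lets $\epsilon\to0$. This forces the algorithm to choose between serving alone, at cost about $1+t_i$, and waiting to batch both.

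If the algorithm batches both requests together, the adversary treats the pair as one step and continues. Offline can then serve the batched requests at the same cost or less. The per-step comparison becomes: the algorithm pays $1+t$ per service, while offline pays either $1$ per request or about $1+t$ per two requests. Summing, ALG/OPT $\to2$, and we take the limit over $\epsilon$ and $n$ to obtain the ratio at least two for every deterministic algorithm.
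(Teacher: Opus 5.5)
Your core adversary and comparison are the paper's: force singleton online batches, then average the two shifted pairings $(1,2),(3,4),\ldots$ and $\{1\},(2,3),\ldots,\{N\}$. The gap is in the timing, which you flag yourself as the obstacle, and you resolve it incorrectly.

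First, the adversary cannot release request $i+1$ at exactly the algorithm's service time. In this model, arrivals at a timestamp are revealed before services at that time, so the algorithm would see the new request first and could batch both. The fix is a strictly positive gap after the service: $a_{i+1}=t_i+\delta_i$ with $\delta_i>0$, where $t_i$ is the absolute service time of request $i$ and $x_i=t_i-a_i$ is its online wait. A pair $(i,i+1)$ served at $a_{i+1}$ then costs $1+x_i+\delta_i$. With $X=\sum_{i<N}x_i$ and $\Delta=\sum_{i<N}\delta_i$, the two pairings cost $O_1+O_2=(N+1)+X+\Delta$, while $\ALG\ge N+X+x_N$. Hence
\[
\ALG/\OPT\ge 2N/(N+1+\Delta).
\]
Let $\Delta\to0$ for fixed even $N$, and then let $N\to\infty$.

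Your worry about instant service is unfounded. The offline schedules do not depend on when the algorithm served request $i$; offline simply keeps request $i$ pending until $a_{i+1}$. An algorithm with $x_i=0$ pays $1$ per request, while each offline pair costs $1+\delta_i$, so the bound is unaffected.

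Second, the repair you propose, releasing just before the algorithm's planned service, breaks the argument. Request $i+1$ then arrives while request $i$ is still pending, so the algorithm is no longer forced into singletons. When $x_i=0$, the proposed release time even precedes $a_i$. If the algorithm serves both requests together for about $1+x_i$, offline cannot do much better on those two requests, so that step has ratio about one, not two. Your sentence that offline serves the batched requests ``at the same cost or less'' only shows a per-step ratio of at least one. The claimed summation to a ratio of two therefore does not follow.

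Drop the release-just-before variant and the batching case entirely. With strictly positive post-service gaps, your averaging argument, with $\delta_i$ added to each pair cost and the limits taken in the order above, is a complete proof.
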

\begin{proof}
We fix setup cost $c>0$ and an even integer $N$. The adversary releases
the first request at $a_1=0$. After the algorithm serves request $i$
at $t_i$, request $i+1$ arrives at $a_{i+1}=t_i+\delta_i$, where
$\delta_i>0$. If the algorithm never serves the unique pending request,
the adversary ends the input. Otherwise, we write $x_i=t_i-a_i$ and
$\Delta=\sum_{i<N}\delta_i$. Every online batch is a
singleton, and hence
\[
 \ALG\ge Nc+\sum_{i=1}^N x_i.
\]
We compare the algorithm with two offline schedules. The first serves the pairs
$(1,2),(3,4),\ldots,(N-1,N)$ at their second arrivals. The second serves
requests $1,N$ singly and serves $(2,3),(4,5),\ldots,(N-2,N-1)$ at their
second arrivals. Their costs $O_1,O_2$ satisfy
\[
 O_1+O_2=(N+1)c+\sum_{i<N}x_i+\Delta.
\]
Indeed every adjacent arrival gap is charged exactly once, and a pair's
additive waiting and maximum waiting are both that gap. Both schedules use at most two requests per batch. With
$X=\sum_{i<N}x_i$,
\[
 \frac{\ALG}{\OPT}
 \ge \frac{2(Nc+X+x_N)}{(N+1)c+X+\Delta}
 \ge \frac{2Nc}{(N+1)c+\Delta}.
\]
For fixed $N$, make $\Delta/c$ arbitrarily small, and then let even $N$
grow. The resulting ratios approach two. For a deterministic algorithm, these release times can be fixed
before presenting the input.
\end{proof}

\subsection{An oblivious randomized lower bound}
\label{sec:randomized-fixed-location}
The fixed-location special case also gives a randomized obstruction for
capacitated LAD. The following lower bound concerns
pure multiplicative guarantees, as in Section~\ref{sec:model}.

\begin{theorem}\label{thm:randomized-fixed-lower}
For cardinality capacity $k$, every randomized online algorithm has ratio
at least
\[
 \frac{k}{1+(k-1)(1-e^{-1})}
\]
against an oblivious adversary, already at one location. Consequently a
competitive constant uniform in $k$ is at least $e/(e-1)$. The same
constant lower-bounds the arbitrary-size model.
\end{theorem}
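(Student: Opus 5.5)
The plan is to apply Yao's principle: exhibit one distribution over single-location inputs, each with at most $k$ requests, and show that every deterministic online algorithm has expected cost at least $\frac{k}{1+(k-1)(1-e^{-1})}$ times the expected offline cost.

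\textbf{Input family.} Use setup cost $w=1$ and a single location. The adversary releases requests $1,2,\ldots$ at evenly spaced times $0,\tau,2\tau,\ldots$ and stops after a random number $J\in\{1,\ldots,k\}$ of requests. Since $J\le k$, every realization fits in one batch. The offline optimum therefore serves everything at the last arrival and pays $1+(J-1)\tau$.

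\textbf{Deterministic behaviour.} Against this family, a deterministic algorithm is described, up to dominated choices, by the moments at which it serves while requests keep arriving. If it serves a batch before the stream stops, that batch and a later one each pay the setup cost, so the online cost is at least roughly $2$ plus waiting, versus $1+$span offline. If instead it waits until the oldest request has age $s$, it pays that waiting even when no further requests come. This is the ski-rental tradeoff, and the role of $k$ is to make the horizon finite. Concretely, I would take $\tau$ so that the total span $(k-1)\tau$ is comparable to the setup cost. The adversary's stopping law, a geometric or exponential-type distribution truncated at $k$, is chosen so that every threshold of the first service has the same expected cost; this equalization is what produces the factor $1-e^{-1}$.

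\textbf{Computing the ratio.} The hard step is exact: finding the truncated distribution of $J$ and the spacing $\tau$ so that (i) every deterministic service threshold has the same expected online cost, and (ii) the ratio of expected costs equals exactly $k/(1+(k-1)(1-e^{-1}))$, rather than only the limit $e/(e-1)$. I would first take a continuous version: a fluid of arrivals with stopping time density proportional to $e^{t}$ on a normalized horizon. I would solve the indifference equation there, which is the standard ski-rental ODE, then discretize to $k$ atoms and check that the discretization error is absorbed by the $1/k$ and $(k-1)/k$ weights in the formula.

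\textbf{Remaining claims.} After fixing the distribution, I would bound the expected cost of any deterministic algorithm by a case split on its first service time relative to the arrival grid, and compare with $\E[1+(J-1)\tau]$. Letting $k\to\infty$ gives the uniform constant $e/(e-1)$. For the arbitrary-size model, unit sizes with $K=k$ embed this instance, so the same lower bound transfers.
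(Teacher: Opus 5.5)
\textbf{There is a genuine gap: your input family cannot give the stated bound, for any stopping law and any spacing $\tau$.} With deterministic spacing, once time $\tau$ passes after an arrival with no new one, the algorithm knows the input has ended. After the $j$th arrival, a normalized algorithm therefore has only two undominated choices:
\begin{itemize}[leftmargin=*,itemsep=0.2em,topsep=0.3em]
\item clear at once, which costs $1$ whether or not the stream continues;
\item wait one spacing, which costs $\tau$ if the stream continues and $1+\tau$ if it stopped.
\end{itemize}
Intermediate timeouts cost $1+z$ regardless, so they are dominated. These choices decouple across $j$, so the structure is not a single ski-rental problem on the first service time. Writing $p_j=\Pr(J\ge j)$, the best deterministic response has
\[
 \E[\ALG]=\sum_j\min\{p_j,(1+\tau)p_j-p_{j+1}\},\qquad
 \E[\OPT]=1+\min\{\tau,1\}\sum_{j\ge2}p_j .
\]
For $\tau\le1$, bound each minimum by the convex combination with weight $\lambda=\tau^2/(1-\tau+\tau^2)$ on $p_j$. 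Summing gives
\[
 \E[\ALG]\le\frac{\E[\OPT]}{1-\tau+\tau^2}\le\frac43\,\E[\OPT].
\]
For $\tau>1$ the ratio is at most $1$. So the best Yao bound from this family is at most $4/3<e/(e-1)$. Already for $k=2$ its best value is $(1+\sqrt2)/2$, below $2/(2-e^{-1})$.

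Your fluid limit is worse still. With dense arrivals the algorithm detects the stop almost immediately, and the ratio tends to one. The underlying error is the ski-rental picture built on the age of the oldest request. Offline also pays the whole span when it serves at the last arrival. The only online regret from waiting is the time after the last arrival, and your family caps that at $\tau$. No discretization step can repair this.

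\textbf{The missing idea is to randomize the inter-arrival gaps, not the stopping time.} First normalize any algorithm to clear all pending requests whenever its simulated execution serves the oldest pending request. This costs nothing extra when all requests fit in one batch. The cost then splits per gap:
\begin{itemize}[leftmargin=*,itemsep=0.2em,topsep=0.3em]
\item a gap cut at timeout $z_i$ contributes $1+z_i$;
\item an uncut gap contributes $g_i$;
\item the final batch contributes $1+z_n$.
\end{itemize}
Offline chooses each cut independently, so $\OPT=1+\sum_{i<n}\min\{g_i,1\}$. Each gap is now its own ski-rental instance. Take $n=k$ requests with i.i.d.\ rate-one exponential gaps. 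By memorylessness, the algorithm's timeout in a gap is independent of that gap's length $G$, and for every $z$,
\[
 \E[\min\{G,z\}]+\Pr(G>z)=1 .
\]
Hence $\E[\ALG]\ge (k-1)+1=k$, while $\E[\OPT]=1+(k-1)(1-e^{-1})$. This gives the exact formula directly: $k-1$ gaps each cost $1$ online and $1-e^{-1}$ offline in expectation, and the final batch costs $1$ to both.

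Your remaining steps are fine once this construction is in place: letting $k\to\infty$ gives $e/(e-1)$, and the embedding into the size model is equivalent, after rescaling, to the paper's $n$ requests of size $1/n$.
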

\begin{proof}
We normalize the setup cost to one. When all requests fit together, any
algorithm can be normalized causally to clear all pending requests when
its simulated original execution serves the oldest pending request.
This oldest request has not previously been served by the simulation:
otherwise a normalized clearance would already have removed it. The new
batch has no larger maximum age than that distinct simulated batch, and
costs no more. Other simulated actions can be ignored. This also
normalizes offline schedules to consecutive groups.

We write $g_i=a_{i+1}-a_i$ for successive arrival gaps. A normalized
algorithm cuts a gap at a timeout $z_i$ if no arrival intervenes. A cut
contributes $1+z_i$ to its total cost; an uncut gap contributes $g_i$.
The final contribution is $1+z_n$. The setup in the final contribution
and the setups at cuts count exactly the physical batches, while the
remaining terms partition their waiting times. Offline chooses each cut
independently, obtaining
\begin{equation}\label{eq:nonbinding-gap-opt}
 \OPT=1+\sum_{i<n}\min\{g_i,1\}.
\end{equation}

For the lower bound, the input has $n=k$ requests, starts at zero,
and has $n-1$ independent rate-one exponential arrival gaps. This input
distribution is independent of the algorithm's randomness. Conditional
on the past, its prospective no-arrival timeout $z$ is independent of
the next gap $G$. For every $z\in[0,\infty]$ the expected gap contribution
is
\[
 \E[\min\{G,z\}]+\Pr(G>z)=1.
\]
The final contribution is at least one, and hence $\E[\ALG]\ge n$, whereas
\eqref{eq:nonbinding-gap-opt} gives
$\E[\OPT]=1+(n-1)(1-e^{-1})$.
A guarantee on every fixed input must also hold after averaging over
this independent distribution, proving the displayed ratio. Letting $k$ grow gives the uniform bound.
For arbitrary sizes, $n$ requests of size $1/n$ give the same bound
as $n$ grows.\qedhere
\end{proof}

\section{Analysis of the independent-class baseline}
\label{sec:independent-class-baseline}
Section~\ref{subsec:class-baseline} defines the algorithm and gives its
matching lower-bound instance. Here we prove the upper bound.

\begin{proof}[Proof of Proposition~\ref{prop:line-maximum-baselines}]
Fix an offline batch with maximum rounded radius $R$ and arrival span
$\Delta$. If it contains $n_x$ requests of radius $x$, its restriction can
be served together at its last arrival for at most $x+\Delta$, or as
singletons at their arrivals for cost $n_xx$.  Each restriction fits in one batch, giving total cost at most
\[
 \sum_{x\le R}\min\{x+\Delta,n_xx\}
 \le\sum_{\substack{x\le R\\x\text{ dyadic}}}
             \min\{x+\Delta,kx\}.
\]
If $\Delta=0$, this sum is at most $2R$.  Otherwise the radii
$x\le\Delta/k$ contribute at most $2\Delta$ by geometric summation.
There are at most $\lceil\log_2k\rceil+1$ dyadic radii in
$(\Delta/k,\Delta]$, each contributing at most $2\Delta$. Radii
$x>\Delta$ contribute at most $\sum_{x\le R}2x\le4R$.
The sum is at most $(2\lceil\log_2k\rceil+4)(R+\Delta)$.
Applying these restrictions to all offline batches gives a feasible
comparison schedule for each class. The factor-two TCP guarantee from
Proposition~\ref{prop:fixed-unit-two} and the factor-two radius rounding
therefore give the claimed competitive ratio.
\end{proof}

\section{Proof of capacity transfer for additive delay and deadlines}
\label{sec:review}
We derive the capacity-transfer bounds discussed in
Section~\ref{sec:related-work} for additive waiting costs and deadlines.
For a rooted-tree batch $B$, let $p_r$ be the length of request $r$'s
root path. Order requests by a depth-first traversal and cut this order
at a uniformly chosen residue modulo $k$. An edge used by $n_e>0$
requests appears in an expected $1+(n_e-1)/k$ pieces. Summing over edges
bounds the service cost by
\[
 (1-1/k)f(B)+\frac1k\sum_{r\in B}p_r.
\]
For sizes normalized to capacity one, serve requests larger than $1/2$
separately. Group the others by their cumulative start positions in
shifted intervals of length $1/2$. Each group fits, and an edge is used
by at most $1+2\sum_{r\in B_e}s_r$ groups in expectation, where $B_e$
contains its small requests. Large requests cost at most twice their
volume charge. The resulting bound is $f(B)+2\sum_{r\in B}s_rp_r$.
Choosing the cheapest offset makes both constructions deterministic.
Only polynomially many candidates are needed: an already feasible
batch need not be cut, and otherwise there are $k<|B|$ unit offsets.
For arbitrary sizes, the partition changes only when a shifted boundary
crosses a cumulative start position, giving at most $|B|$ breakpoints.

Let $S_A,D_A$ be the service and additive waiting costs of the virtual
schedule. Set $\alpha=1-1/k$ and write $V=\sum_r p_r/k$ for capacity
$k$, or $V=\sum_r s_rp_r$ for sizes. Every feasible batch pays at least
its volume charge, so $V\le S^*$ for any capacitated comparator with
service cost $S^*$ and waiting cost $D^*$. Keeping all service times
therefore gives
\[
 \ALG\le\alpha S_A+D_A+V
 \quad\text{or}\quad
 \ALG\le S_A+D_A+2V,
\]
respectively. An uncapacitated ratio $R$ immediately yields competitive ratios
$R+1$ and $R+2$, respectively. With deadlines, $D_A=D^*=0$, giving
$(\alpha R+1)S^*$ for capacity $k$. For additive waiting, run the baseline
with service costs scaled by $\alpha$, whenever it supports this scaling:
\[
 \ALG\le R(\alpha S^*+D^*)+S^*
 \le\max\{R,\alpha R+1\}(S^*+D^*).
\]
All partitions use only the current virtual batch. A capacity-feasible advised schedule is also a valid comparator, so
these inequalities transfer both consistency and robustness.

\section{Supporting proofs for the online-advice algorithm}
\label{sec:advice-analysis}
\label{sec:feasible-advice-monitor}
Section~\ref{sec:online-advice} gives the algorithm, the residual-advice
argument, and the competitive bounds. This appendix proves the packing
and budget lemmas used there, verifies completion and implementation,
and proves the robustness lower bound.

\subsection{Packing procedures and scores}
\begin{proof}[Proof of Lemma~\ref{lem:stream-two-clearances}]
For~\eqref{eq:stream-ordinary-clearance}, the procedure sorts requests
by $q_r=\max\{x_r,w_r(t)\}$ and groups consecutive $k$ requests.
For every $z$, the number of groups whose maximum $q$ exceeds $z$ is
\[
 \left\lceil\frac{|\{r:q_r>z\}|}{k}\right\rceil
 \le
 \left\lceil\frac{|\{r:x_r>z\}|}{k}\right\rceil+
 \left\lceil\frac{|\{r:w_r(t)>z\}|}{k}\right\rceil.
\]
Integrating bounds the sum of group maxima of $q$ by $S(P)+D(P,t)$.
Each actual batch costs at most twice its maximum $q$.

For~\eqref{eq:stream-biased-clearance}, let $L$ denote the set
$\{r\in P:x_r\le\eta w_r(t)\}$. The procedure packs $L$ by
nonincreasing age and $P\setminus L$ by nonincreasing radius, using
groups of size at most $k$ in both cases. On $L$, the maximum radius of a group is at most
$\eta$ times its maximum age. On the complement, its maximum age is at
most $1/\eta$ times its maximum radius. The resulting cost is at most
$(1+\eta)D(L,t)+(1+1/\eta)S(P\setminus L)$, which is bounded
by~\eqref{eq:stream-biased-clearance}.
\end{proof}
Algorithm~\ref{alg:line-cleanup} implements the two clearance procedures.
\begin{algorithm}[!ht]
\caption{Clearing pending requests at an advice switch}\label{alg:line-cleanup}
\begin{algorithmic}[1]
\Function{SortPack}{$P,v,k$}
 \State Sort $P$ by nonincreasing $v_r$, breaking ties by request identity.
 \State \Return consecutive groups of $k$ requests, with a shorter last group if needed
\EndFunction
\Function{PackValue}{$P,v,k$}
 \State \Return $\displaystyle\sum_{G\in\Call{SortPack}{P,v,k}}\max_{r\in G}v_r$
\EndFunction
\Function{Clear}{$P,t,\textit{mode},\eta$}
 \State $w_r\gets t-a_r$ for $r\in P$
 \If{$\textit{mode}=\textit{ordinary}$}
  \State $\Pi\gets\Call{SortPack}{P,\max\{x,w\},k}$
 \Else
  \State $L\gets\{r\in P:x_r\le\eta w_r\}$
  \State $\Pi\gets\Call{SortPack}{L,w,k}\ \cup\
                  \Call{SortPack}{P\setminus L,x,k}$
 \EndIf
 \State Serve every nonempty $G\in\Pi$ at $t$; remove its members from physical pending.
\EndFunction
\end{algorithmic}
\end{algorithm}

\begin{proof}[Proof of Lemma~\ref{lem:stream-source-score}]
Arrivals and waiting only increase packing values. Subadditivity at a
service of $G$ gives
\[
 D(P_X,t)\le\max_{r\in G}w_r(t)+D(P_X\setminus G,t),
 \qquad S(P_X)\le\max_{r\in G}x_r+S(P_X\setminus G).
\]
The increase in $p_X$ therefore pays for the total decrease of $S+D$.
To bound $H_X$, we restrict all future services of $X$ to $P_X(t)$.
These restrictions form a feasible partition of the current virtual
pending set. Their total future cost is at
least the service cost plus the current-age cost of this partition,
which is at least $S(P_X(t))+D(P_X(t),t)$.

For~\eqref{eq:stream-spatial-certificate}, we partition $P$ using restrictions
of the past services of $B$, together with a partition of $P_B(t)$
minimizing service cost. Past service costs are at most $p_B(t)$, and restriction
cannot increase them. This proves the bound even when $P$ contains
requests that $B$ has already served.
\end{proof}
\subsection{Block budgets and geometric sums}
\begin{proof}[Proof of Lemma~\ref{lem:stream-block-budgets}]
Restrictions of source services cannot increase cost, and hence $F_J$ is at
most the source's paid-cost increment during the block. Just before a
source-changing event, the old active score is still within its cap;
the event causing a jump has not been committed. Monotonicity under
subsets gives $D(P_J,t_J)\le D(P_X,t_J)$ and $S(P_J)\le S(P_X)$.
For $A$, these facts bound $F_J+D(P_J,t_J)$ by $b_J$, proving
\eqref{eq:stream-A-block}. For $B$, apply
\eqref{eq:stream-ordinary-clearance} and bound paid cost plus twice
$S+D$ by twice the old cap $\delta b_J$. The same bounds hold without
exit cleanup for the final block.
\end{proof}

\begin{proof}[Proof of Lemma~\ref{lem:stream-level-accounting}]
Once $A$ is ineligible at a level, monotonicity of $H_A$ prevents it
from becoming eligible there again. Thus $B\to A$ requires a higher
level. The same observation proves the distinctness claims, including
when a block spans several levels without a cleanup.

Whenever a level is selected for $B$, preference for $A$ implies
$b\le H_A\le C_A$. This remains true for the last level of a $B$ block:
each same-source level increase again tests $A$ first. Each sum in
\eqref{eq:stream-advice-level-sums} is therefore a sum of distinct
dyadic levels at most $C_A$, and is at most $2C_A$. They are two separate
sums and will have separate coefficients in the cost bound.

Initially $b_0=2S(J)\le2C_B$. Every subsequent level is twice a level
where both eligibility tests failed, and hence $H_B\ge\delta b$ at that failed
level. Since $H_B\le C_B$ and $\delta\le1$, this proves
\eqref{eq:stream-largest-level}. The final assertion is a geometric sum.
\end{proof}
\subsection{Completion and implementation}
We finish the proof of Theorem~\ref{thm:feasible-stream-monitor} by
verifying completion and online computation. On a finite
input, $C_B$ is finite, and hence~\eqref{eq:stream-largest-level} permits only
finitely many levels and source changes. A final $B$ block completes
because $B$ does. A final $A$ block with a request pending forever would
have unbounded $D(P_A,t)$, exhausting its finite cap. Hence all requests
are eventually served without an end-of-input signal. The cost bounds
can first be applied to every finite prefix of physical actions, and hence
this completion argument does not assume either bound circularly.

Both packing values and both cleanups require only sorting. For fixed
$P_X$, ages have a fixed order and $D(P_X,t)$ is affine with slope
$\lceil|P_X|/k\rceil$. Timers therefore use only current requests.
Discrete jumps use the newly revealed event and the current source
states; intermediate ineligible dyadic levels can be skipped directly.
Under rational input encoding, including $\eps$, the number of levels and the arithmetic
overhead are polynomial whenever $B$ is polynomially simulable.
Malformed events are charged to the advice input length; as usual, event
streams are locally finite.

\subsection{Proof of the robustness lower bound}
\begin{proof}[Proof of Theorem~\ref{thm:stream-lower}]
We normalize the setup cost to one. The adversary releases the first
request at time zero and advises its immediate singleton service.
After the algorithm serves request $i$, the adversary waits
$\delta_i>0$ before releasing request $i+1$, again with immediate
singleton advice. The construction stops after request $k$.
Consistency forces the algorithm to serve each request even if no
others arrive.

Let $x_i$ denote the wait of request $i$, and write $X=\sum_{i<k}x_i$. On the input
stopped after $k-1$ requests, advice costs $k-1$, while the algorithm costs
at least $(k-1)+X$. Since future advice is not revealed, its execution
through these services is identical to that on the full input. Hence
$X\le\eps(k-1)$. On the full input, online cost is at least $k+X$.
All requests fit together, and hence serving them at the final arrival costs at
most $1+X+\Delta$, where $\Delta=\sum_{i<k}\delta_i$.
For $\Delta<k-1$, the function $(k+X)/(1+X+\Delta)$ decreases in $X$.
Letting $\Delta\downarrow0$ proves~\eqref{eq:stream-lower}; $k=1$ is
immediate. Letting $k\to\infty$ gives the stated lower bounds on
robustness uniform in capacity.
\end{proof}

\section{Proof of the arbitrary-size LAD extension}
\label{sec:arbitrary-sizes}
We prove the guarantee stated in Section~\ref{subsec:arbitrary-sizes}.

\begin{proof}[Proof of Corollary~\ref{cor:size-path-load}]
Under the preprocessing of Section~\ref{subsec:arbitrary-sizes}, every
closed small-item bin has load greater than $K/2$, since the rejected
item has size at most $K/2$. Large-item singleton batches also have
load greater than $K/2$.
Each preprocessed batch at radius $y$ costs at most $2y$. If $P$ is the
set of preprocessed requests, then
\[
 \ALG_{\rm pre}\le\frac4K\sum_{r\in P}s_ry_r.
\]
The residual cohort has total size at most $K$, even if its cardinality
is arbitrarily large.

For the weighted path load defined in Section~\ref{subsec:arbitrary-sizes},
the same integration argument gives $\sum_{r\text{ assigned to }u}s_r\le K$; whole cohorts,
and hence indivisible requests, remain intact. Service times and the
bound $3R_u$ are unchanged.

The weighted certificates are
\[
 \beta_r=\frac{s_rR_u}{K},\qquad
 \gamma_r=\frac{s_ry_v}{M_v}
 \quad\text{for an own-cohort request of a heavy root }v.
\]
All other requests receive zero $\gamma$ weight. If a heavy root has
$M_v=0$, we assign its whole weight $y_v$ to one designated request
in the cohort that created it. Such a root
is top-level: with a selected ancestor, a zero-mass cohort adds nothing
to a feasible path load and is accepted. Thus it cannot occur among the
nonmaximal heavy roots in the certificate proof.

For any feasible offline batch $B$, we write
$\alpha_B=K^{-1}\sum_{r\in B\setminus P}s_r\le1$.
Its preprocessed mass is at most $K(1-\alpha_B)$. The proofs of
Lemmas~\ref{lem:line-beta} and~\ref{lem:line-gamma} now give
\[
 \sum_{r\in B}\beta_r\le D_B+(2+\alpha_B)Y_B,
 \qquad
 \sum_{r\in B}\gamma_r\le D_B+(2+\alpha_B/\theta)Y_B.
\]
Indeed, the integral over each maximal represented home is unchanged;
the remaining terms are bounded by the total size $\alpha_BK$ in place
of a request count. Immediate child-root intervals are still disjoint,
and hence Lemma~\ref{lem:line-light} charges each request at most
$\beta_r/(1-\theta)$. Combining these bounds with preprocessing gives
exactly~\eqref{eq:line-direct-comparison}. The deterministic and
random-scale conclusions follow with the same constants. The operations
use cohort packing, tree ancestry, and arithmetic on revealed data;
no optimum bin packing is required.
\end{proof}

Algorithm~\ref{alg:size-path-load} replaces only the cohort
preprocessing and the load units in Algorithm~\ref{alg:path-load}.
It retains the final nonempty bin even when its load is exactly $K$.
\begin{algorithm}[H]
\caption{Weighted cohort preprocessing for the path-load algorithm}\label{alg:size-path-load}
\begin{algorithmic}[1]
\Require Home cohort $Q$ of width $y$, revealed at $b$; capacity $K$
\State Serve every $r\in Q$ with $s_r>K/2$ as a singleton at $b$.
\State $T\gets\varnothing$
\For{remaining requests $r$ in a fixed order}
 \If{$\sum_{q\in T}s_q+s_r>K$}
  \State Serve $T$ at $b$; $T\gets\varnothing$.
 \EndIf
 \State $T\gets T\cup\{r\}$
\EndFor
\If{$T\ne\varnothing$}
 \State Pass $T$ as one residual cohort to Algorithm~\ref{alg:path-load}.
 \State Use $M_v=\sum_{r\in T}s_r$ and replace $k$ by $K$ in every path-load term.
\EndIf
\State Leave assignments, ancestor selection, and root service times unchanged.
\State Retain a nonempty zero-load residual cohort; nonemptiness is not a load test.
\end{algorithmic}
\end{algorithm}

\bibliographystyle{plain}
\bibliography{references}
\end{document}